\newcommand{\opdiv}{\operatorname{div}}
\newcommand{\T}{\mathsf{T}}
\DeclareMathOperator*{\argmin}{\textrm{argmin}}
\newcommand{\R}{\mathbb{R}}
\newcommand{\spanV}{\operatorname{span}}
\DeclareMathOperator*{\argmax}{\textrm{argmax}}
\DeclareMathOperator{\diag}{diag}
\newtheorem{theorem}{Theorem}
\newtheorem{proposition}[theorem]{Proposition}%
\begin{document}

\title[Connecting Image Inpainting with Denoising\\ in the Homogeneous Diffusion Setting]{Connecting Image Inpainting with Denoising\\ in the Homogeneous Diffusion Setting}


\author*[1]{\fnm{Daniel} \sur{Gaa}}\email{gaa@mia.uni-saarland.de}
\author[1]{\fnm{Vassillen} \sur{Chizhov}}\email{chizhov@mia.uni-saarland.de}
\author[1]{\fnm{Pascal} \sur{Peter}}\email{peter@mia.uni-saarland.de}
\author[1]{\fnm{Joachim} \sur{Weickert}}\email{weickert@mia.uni-saarland.de}
\author[1]{\fnm{Robin Dirk} \sur{Adam}}\email{adam@mia.uni-saarland.de}

\affil[1]{\orgdiv{Mathematical Image Analysis Group, Faculty of Mathematics and Computer Science}, \orgname{Saarland University}, \orgaddress{\street{Campus E1.7}, \postcode{66041} \city{Saarbr\"ucken}, \country{Germany}}}


\abstract{
While local methods for image denoising and inpainting may use similar
concepts, their connections have hardly been investigated so far.
The goal of this work is to establish links between the two by
focusing on the most foundational scenario on both sides:
the homogeneous diffusion setting.
To this end, we study a denoising by inpainting (DbI) framework: It 
averages multiple inpainting results from different noisy subsets.
We derive equivalence results between DbI on shifted regular grids and 
homogeneous diffusion filtering in 1D via an explicit relation 
between the density and the diffusion time. 
We also provide an empirical extension to the 2-D case. 
We present experiments that confirm our theory and suggest that it 
can also be generalized to diffusions with non-homogeneous data or
non-homogeneous diffusivities. More generally, our work demonstrates 
that the hardly explored idea of data adaptivity deserves more 
attention: It can be as powerful as some popular models with 
operator adaptivity.}

\keywords{Diffusion, Denoising, Inpainting, Partial Differential Equations, 
Sampling}


\pacs[MSC Classification]{65D18, 68U10, 94A08}

\maketitle


\section{Introduction}
\label{sec:introduction}

Investigating connections between different fields in image analysis 
has often been rewarded with deep structural insights.
Consider for example the link between variational image 
inpainting~\cite{BSCB00, EL99b, GL14, MM98a, Sc15} and optic flow 
computation~\cite{HS81, NE86, WS01} via the concept of the 
{\em filling-in effect}.
This effect is due to the smoothness term (regularizer) of the models,
which inserts information at locations where the data term is absent or small 
in magnitude. The gradient flow for minimizing the variational energy 
functional leads to partial differential equations (PDEs) with a 
diffusion term.

While the filling-in effect has an obvious benefit for image inpainting, 
it can also lead to more powerful optic flow methods.
It produces a dense flow field from the sparse information of the data term.
Surprisingly, the parts of the flow field that are filled in by the 
diffusion-like regularization terms are usually the ones with the highest 
confidence~\cite{BW05}.

\Cref{fig:introduction} shows a similar but hitherto hardly studied 
effect when performing \emph{sparse inpainting} on noisy data. There 
the known data -- 
the so-called \emph{mask} -- is a scattered set of pixels. The 
noisy mask pixels remain unchanged during the process, 
while the unknown areas in between are interpolated smoothly by averaging 
information from the noisy pixels. 
We thus again have a scenario, where {\em the filled-in data are more 
reliable than the known data}. In the present manuscript we study
how far this idea can lead us.


\begin{figure}[tbhp]
\centering
\tabcolsep5pt
\begin{center}
\begin{tabular}{cccc}
\textbf{(a) original} & \textbf{(b) noisy} & 
\textbf{(c) mask} & \textbf{(d) inpainting}
\\[1.2mm]
\includegraphics[width=0.22\textwidth]{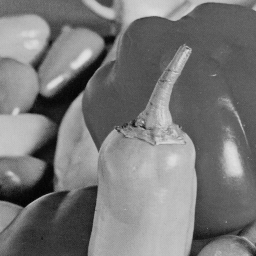} &
\includegraphics[width=0.22\textwidth]{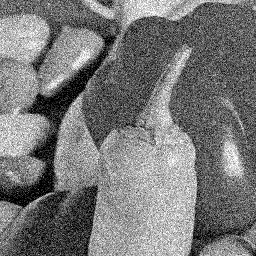} &
\includegraphics[width=0.22\textwidth]{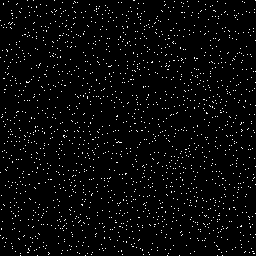} &
\includegraphics[width=0.22\textwidth]
{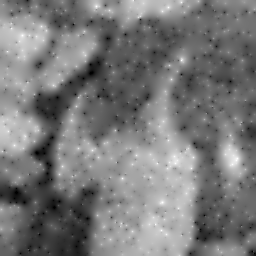} \\[-0.8mm]
\emph{peppers} & $\sigma_n=30$ & $3\%$ density & HD inpainting
\end{tabular}
\end{center}
\caption{\label{fig:introduction} Homogeneous diffusion (HD) inpainting on
the test image \emph{peppers} ($256 \times 256$ pixels, image range $[0, 255]$) 
with additive Gaussian noise of standard deviation $\sigma_n=30$ that we do not
clip. 
The mask pixels are randomly selected.
Note that the inpainted pixels are more reliable,
since they average noisy information from the neighborhood.
The visual difference is also reflected by the mean squared error (MSE):
The MSE of the noisy image in (b) is 904. 
Since the mask pixels are chosen randomly and are not changed by the 
inpainting, the MSE at mask pixel locations in (d) is still approximately 900.
However, the total image MSE in (d) is only 475.}
\end{figure}


\subsection{Our Contribution}
\label{subsec:contribution}

The goal of our work is to shed some light on the connections 
between PDE-based inpainting and denoising, two tasks which have coexisted 
for a long time, while their links have hardly been studied so far. 
We bridge this gap by a detailed investigation of the unconventional 
idea of denoising by inpainting.
To facilitate a rigorous mathematical analysis, we focus on 
homogeneous diffusion. As will be explained below, it constitutes the 
most transparent and most foundational setting in both worlds. 

The present paper builds upon our previous conference 
publication~\cite{APW17}, in which the basic denoising by inpainting 
framework is established. 
This framework reconstructs a denoised version of an image by 
averaging the results of multiple inpaintings obtained from distinct masks. 
Furthermore, two concrete implementations of this framework 
are proposed in~\cite{APW17}: 
The first uses shifted regular masks and allows to establish a relation 
between denoising by inpainting and classical diffusion filtering in 1D, 
while the second uses probabilistic densification to adapt the masks to 
the image structures and enables an edge-preserving denoising behavior.

We extend the aforementioned results by a much broader study of the 
framework in~\cite{APW17}, providing a fundamental understanding of
the connections between PDE-based image inpainting and denoising. 
Since denoising methods can also be used as plug-and-play priors in 
algorithms for solving inverse problems~\cite{LBA+22, REM17, VBW13},
our relations between inpainting and denoising approaches may have an 
even broader application spectrum.
Compared to~\cite{APW17}, we introduce the following additional 
contributions:
\smallskip
\begin{itemize}
\item We show that the heuristically motivated DbI framework 
      from~\cite{APW17} can be seen as a representative of a general 
      probabilistic framework, for which we derive a sound theory. 
      We argue that the denoising result obtained with such framework is an
      approximation of a minimum mean squared error (MMSE) estimate. 
\item We provide convergence estimates for the framework and propose a
     deterministic sampling approach to boost the convergence.
\item We prove a general relation between the mask density of regular 
      masks in the DbI framework and the diffusion time of homogeneous 
      diffusion filtering in 1D. We also propose an empirical generalization 
      of this result to 2D for uniform random masks.
\item We integrate a step that optimizes the gray values at the 
      selected mask pixels (tonal optimization) into the DbI framework. 
      We investigate its effect on the MMSE estimate and perform experiments 
      which confirm that tonal optimization can improve the denoising 
      performance of DbI in practice.
\item We show that the different spatial optimization approaches in the 
      DbI framework correspond to specific posterior distributions. 
      We compare two such strategies (the one presented in~\cite{APW17} and 
      a novel one) in terms of quality and provide the formulations for the 
      respective probability distributions. 
      Our experiments demonstrate that this data optimization leads to an 
      edge-preserving denoising behavior.
\item We replace homogeneous diffusion inpainting in the DbI framework 
      by biharmonic inpainting and show that it is unable to improve denoising 
      results. This confirms one of our key insights: 
      The hitherto hardly practiced data optimization can be as powerful as 
      widely used operator optimizations.
\end{itemize}


\paragraph{Why Homogeneous Diffusion\,?}
Our decision to focus on homogeneous diffusion is based on several
reasons:
\begin{itemize}
\item For denoising and image simplification, one should keep in mind 
      that homogeneous diffusion filtering is equivalent to Gaussian 
      convolution. The Gaussian is the only convolution kernel that is 
      separable and rotation invariant.
      The diffusion evolution generates a Gaussian scale-space
      representation \cite{Ii62, Li94, SNFJ96}, which is one of the most 
      widely-used scale-spaces and forms the basis of highly successful 
      interest point detectors such as SIFT \cite{Lo04a} and its 
      numerous variants.
\item In inpainting applications, homogeneous diffusion is particularly
      popular in inpainting-based compression \cite{GWWB08}, where one stores
      only a sparse subset of all pixels and reconstructs the image in
      the decoding phase by inpainting. By optimizing the stored data,
      homogeneous diffusion can achieve surprisingly faithful
      reconstructions \cite{MHWT12}. Moreover, its simplicity allows
      a detailed theoretical analysis \cite{BBBW08}, it frees the
      user from specifying parameters, and one can achieve real-time
      performance on current PC hardware even for large images \cite{KW22}.
\item Last but not least, there exist already well-understood connections 
      between diffusion processes for denoising and other approaches, 
      such as variational regularization methods \cite{NFD97,SchW98} 
      and wavelets \cite{WSMW05,DJS17}, but also deep neural network 
      architectures \cite{RH20,ASAP23}.
      Thus, establishing also connections to inpainting ideas gives more 
      comprehensive insights into various paradigms beyond diffusion-based
      denoising.
\end{itemize}
This discussion also implies that {\bf it is not the goal of the present 
paper to design novel approaches that outperform the most recent 
state-of-the-art approaches for denoising or inpainting.} This is 
reserved for future research that may benefit from the foundational 
insights in the our manuscript.


\subsection{Related Work}
\label{subsec:related}

Since we consider image inpainting as well as image denoising, we give an 
overview of some relevant methods from both fields and relate them to our 
work.


\smallskip
\paragraph{PDE-based Denoising and Inpainting}

We borrow several ideas from sparse PDE-based inpainting methods~\cite{GWWB08}. 
We mostly restrict ourselves to homogeneous diffusion inpainting~\cite{Ca88}, 
which can be implemented very 
efficiently~\cite{APMW21, CW21, HPW15, KN22, KW22, MBWF11}, and -- in spite 
of its simplicity -- can produce convincing results for suitably chosen 
data~\cite{BBBW09, BLPPR17, BHR21, CRP14, HSW13, Ho16, MHWT+11, OCBP14}.
Especially on piecewise constant images, such as cartoon images, 
depth maps or flow fields, homogeneous diffusion inpainting in 
conjunction with edge or segment information performs very 
well~\cite{Ca88, GLG12, HMWP13, JPW20, JPW21, LSJO12, MBWF11}.
This even allows some of these methods~\cite{JPW20, JPW21} to 
outperform HEVC~\cite{SOHW12} on such data.
Nonlinear diffusion inpainting methods, e.g., edge-enhancing diffusion (EED) 
inpainting~\cite{GWWB08, WW06}, can improve reconstruction quality 
for sparse inpainting, enabling lossy image 
codecs~\cite{GWWB08, JS21, SPMEWB14} competitive to JPEG~\cite{PM92} and 
JPEG2000~\cite{TM02}. 
On the other hand, such methods are more complex due to their nonlinearity. 
This complexity also carries over to the data optimization process.
Higher-order inpainting operators can also be used for sparse 
inpainting~\cite{BHS09, CRP14, GWWB08, SPMEWB14, THC11}, 
but can be more sensitive to noise. 
The quality of PDE-based sparse inpainting approaches 
strongly depends on the stored data, and in our denoising by inpainting 
framework we incorporate ideas from
spatial optimization~\cite{BBBW09, BLPPR17, CRP14, CW21, HSW13, 
Ho16, MBWF11, MHWT+11, OCBP14} and 
tonal optimization~\cite{CRP14, CW21, Ho16, MHWT+11, Pe19}.
To interpret the filtering results of the denoising by 
inpainting framework, we compare to classical diffusion-based image 
denoising methods. Aside from the simple homogeneous diffusion~\cite{Ii62}, 
we also consider methods that adapt the diffusion operator to the given 
image, namely linear space-variant diffusion~\cite{Fr92} and 
nonlinear diffusion~\cite{PM90}. 
We choose these methods because they are closest conceptually 
so we expect them to provide useful insights.


\smallskip
\paragraph{Patch-based Denoising and Inpainting}

Patch- or exemplar-based methods are another class of inpainting methods, 
and work especially well with textured data. The idea is to copy 
similar patches from known to unknown regions. Efros and Leung have proposed 
the first exemplar-based inpainting method~\cite{EL99b}, but many 
versions have been developed since then 
(e.g.,~\cite{AAD+17, ALM10, BSFG09, CPT04}), 
including the method of Facciolo et al.\ for sparse inpainting~\cite{FACS09}.
Inpainting approaches combining PDE-based and patch-based methods have also 
been presented~\cite{BVSO03, PW15}.
Inspired by the method of Efros and Leung~\cite{EL99b}, a patch-based 
denoising method called NL-means~\cite{BCM05a} has been proposed. It denoises 
an image based on a nonlocal weighted averaging of similar image patches. 
Other algorithms such as the famous BM3D algorithm~\cite{DFKE07} are also 
based on the filtering of image patches.
These observations further substantiate the ties between denoising and 
inpainting. The NL-means method can even be interpreted as a case of a 
denoising by inpainting approach, although it does not use the inpainting 
ideas as directly as we do. Of course, a direct application of patch-based 
inpainting techniques would lead to the copying of erroneous noisy data, 
and not to a denoising effect.


\smallskip
\paragraph{Sparse Signal Approximation}

A popular approach in the field of image denoising relies on the idea 
that signals (and images) can be represented as a linear combination of a 
smaller number of basis signals -- so-called atoms -- that are selected 
from a dictionary~\cite{El10}. Such a dictionary might for example consist
of the basis vectors of a suitable transform, that makes the signal 
representation sparse (e.g., a wavelet transform~\cite{Ma98} or a discrete 
cosine transform (DCT)~\cite{ANR74}).
The task is to then find those atoms, that best represent the given 
signal~\cite{CDS98, DJ94, MZ93}.
To fill in missing information in images, several authors also consider
sparse representations in some transform domain such as the DCT~\cite{Gu02} 
or the shearlet domain~\cite{KKL13}. This shows another bridge between
the two tasks of denoising and inpainting.
Hoffmann et al.~\cite{HPW15} relate linear PDE-based inpainting methods 
to concepts from sparse signal approximation.
They solve the inpainting problem with the help of discrete Green's 
functions~\cite{BL58, CY00}, which can be interpreted as atoms in a 
dictionary. 
This allows for a sparse representation of the inpainting solution. 
Kalmoun et al.~\cite{KN22} follow a similar approach by solving 
homogeneous diffusion inpainting with the charge simulation 
method~\cite{Ka99, KA64}.
An application of homogeneous diffusion inpainting with Green's functions 
is the video codec by Andris et al.~\cite{APMW21}. 
We justify certain design choices within the DbI framework with 
results from this field. Notably, homogeneous diffusion inpainting 
is based on the idea that the Laplacian of the reconstructed image 
is mostly sparse. On the other hand, the DbI framework combines 
multiple noisy sparse representations in order to get a denoised 
but non-sparse representation. The latter can be studied rigorously 
from a Bayesian denoising perspective, which is why we discuss 
this next.


\smallskip
\paragraph{Bayesian Denoising}

The study of denoising has also been carried out from a probabilistic 
perspective. Here, the assumption is that some prior information regarding 
the noise distribution and/or the image distribution is available. 
This can be incorporated in a denoising framework through Bayes' 
rule, such that the final denoised result is conditioned on this 
information about the distributions. The latter provides a 
correspondence between classical denoising 
variational methods and specific Bayesian priors~\cite{EKV23,HKU02,LCB+12}.
The standard approach is to employ statistical inference approaches, 
such as maximum likelihood (ML) estimation, maximum a posteriori (MAP) 
estimation, or minimum mean squared error (MMSE) estimation. Both 
the MAP and MMSE approach rely on a posteriori density, and as such 
they require a model of the distribution of considered classes of 
images.
One of the first such models uses a Gibbs distribution for the 
prior~\cite{GG84}. Subsequently, a number of 
works have built upon this idea. The most relevant to our setting 
is that by Larsson and Selen~\cite{LS07}, which studies MMSE 
estimation in the context of sparse vector representations. 
Our sparse inpaintings can be interpreted as such sparse vector 
representations. 
Moreover, in the current work we show that the averaging performed 
in~\cite{APW17} is in fact a Monte Carlo approach to approximate an 
MMSE estimate. 


\smallskip
\paragraph{Cross-Validation}

We also see the work of Craven and Wahba~\cite{CW78} on (generalized) 
cross-validation as conceptually related to parts of our work.
Cross-validation can be used to optimize parameters in denoising
models~\cite{CW78, LS07, WW98}. It removes data points from given noisy 
observations and judges the quality of a parameter selection in terms 
of the model's capability to reconstruct the data at these locations.
Related ideas are also pursued in \cite{CHN05}.
Probabilistic densification~\cite{HMWP13} and sparsification~\cite{MHWT+11},
two concepts from spatial optimization that we consider in our framework,
also use the error of the inpainted reconstruction at left out locations --
in our case also on noisy data.
Yet, both applications differ, as the goal of the latter methods is to
construct an inpainting mask and not to optimize model parameters.


\smallskip
\paragraph{Neural Denoising and Inpainting}

In recent years, many very powerful methods for inpainting and denoising
have been proposed that rely on neural networks. They are, however, not
a topic of our paper, since we aim at gaining structural insights into 
the connections between inpainting and denoising. 
Such results on classical approaches are still relevant in the learning 
era~\cite{EKV23}. 
They may serve as foundations for deep learning-based methods, and   
model- and learning-based approaches may be fused to obtain powerful and 
transparent algorithms.
It is our hope that in the long run, our insights can also be 
beneficial to neural approaches.


\subsection{Paper Organization}
\label{subsec:organization}

In \Cref{sec:diffusion} we briefly 
introduce the basic idea behind diffusion filtering and its application to 
image denoising and image inpainting. In \Cref{sec:framework} we present the 
framework for denoising by inpainting from~\cite{APW17} and show that it can 
be interpreted as a Monte Carlo approach for approximating an MMSE estimate. 
We additionally provide convergence results, and suggest a method to 
boost the convergence by employing low-discrepancy sequences.
In \Cref{sec:rel-dbi-hd} we relate denoising by inpainting with non-adaptive
masks to classical diffusion filtering. 
In \Cref{sec:adaptive} we present strategies for adaptively selecting the 
mask pixels in the DbI framework, which leads to space-variant denoising 
behavior.
Our experiments and results are presented in \Cref{sec:experiments}, 
and we conclude the paper in \Cref{sec:conclusion}.


\section{Basics of Diffusion Filtering}
\label{sec:diffusion}

In its original context of physics, diffusion is a process that equilibrates 
particle concentrations. When working with images, we interpret the
gray values as particle concentrations and use diffusion processes as 
smoothing filters that balance gray value differences.
To this end, we define the original grayscale image as a function 
$f:\Omega \to \R$, with $\Omega \subset \R^2$ being a rectangular 
image domain. Similarly, $u:\Omega \times [0,\infty) \to \R$ denotes 
the evolving, filtered image. 
Then the diffusion evolution is described by the following PDE:
\begin{equation}
\label{eq:diff-pde}
\partial_t u(\bm{x}, t) = \opdiv(g \bm{\nabla} u(\bm{x}, t)) 
    \quad\textnormal{for }\bm{x} \in \Omega, \,\,t \in (0, \infty).
\end{equation}
Here $t$ denotes time, $\bm \nabla=(\partial_x, \partial_y)^\T$
is the spatial gradient, $\opdiv(\bm{v}) = \partial_x v_x + \partial_y v_y$ 
is the spatial divergence, and the scalar diffusivity $g$ determines the 
local smoothing activity. We discuss different choices of $g$ in
\Cref{subsec:denoising}.
Note that $g$ can be extended to a diffusion tensor to introduce anisotropy 
into the process~\cite{We97}, but since we do not consider such a case in 
this paper, we refrain from discussing it here.
We equip the PDE with an initial condition at time $t=0$ and reflecting 
boundary conditions at the image boundary $\partial \Omega$:
\begin{alignat}{3}
\label{eq:diff-ic} u(\bm{x}, 0) &= f(\bm{x}) 
    \quad &\textnormal{for }& \bm{x} \in \Omega, \\
\label{eq:diff-bc} \partial_{\bm{n}} u(\bm{x}, t) &= 0 
    \quad &\textnormal{for }& 
        \bm{x} \in \partial \Omega, \,\,t \in (0, \infty),
\end{alignat}
where $\bm{n}$ is the outer normal vector at the image boundary.
Solving this initial boundary value problem for $u$ yields a family of 
filtered images $\{u(\cdot, t) \mid t \geq 0\}$.


\subsection{Diffusion for Image Denoising}
\label{subsec:denoising}

In image denoising the image $f$ is a noisy version of the noise-free 
ground truth image $f_r$. 
In our case we assume zero-mean additive white Gaussian noise, i.e., 
$f = f_r + n$ with $n\in \mathcal{N}(0, \sigma_n^2)$. 
Diffusion processes are good candidates for image denoising tasks thanks to 
their smoothing properties. Depending on the form of the diffusivity $g$, 
different processes are obtained.


\subsubsection{Homogeneous Diffusion}
\label{subsub:homogeneous}

By setting $g \equiv 1$, \labelcref{eq:diff-pde} simplifies to 
$\partial_t u = \Delta u$, with 
$\Delta u = \partial_{xx} u + \partial_{yy} u$ being the Laplacian operator. 
The resulting process is known as \emph{homogeneous diffusion}~\cite{Ii62}. 
Its analytical solution in the unbounded image domain $\R^2$ is given by a 
convolution of the original image with a Gaussian kernel 
$K_\sigma$ with standard deviation $\sigma = \sqrt{2t}$. 
The resulting images $\{u(\cdot, t) \mid t \geq 0\}$ constitute the 
so-called Gaussian scale-space~\cite{Ii62, WII97}.
Since $g$ is selected to be constant, the smoothing strength is the same 
across the entire image. Therefore, not only the noise is reduced, but also 
semantically important image structures such as edges are smoothed.


\subsubsection{Linear Space-Variant Diffusion}
\label{subsub:linear-space-var}

To overcome the drawbacks of homogeneous diffusion, one can make the process 
space-variant by selecting a diffusivity function that varies depending on the 
structure of the \emph{initial} image $f$~\cite{Fr92}. 
This is called \emph{linear space-variant diffusion}.
If edges and other high-gradient features are to be preserved, the diffusivity 
should be decreasing with increasing gradient magnitude of the image, 
so that that the smoothing would be reduced at edges.
An example for a suitable function is the Charbonnier 
diffusivity~\cite{CBAB97}:
\begin{equation}
g(|\bm{\nabla} f|^2) 
= \frac{1}{\sqrt{1+\frac{|\bm{\nabla} f|^2}{\lambda^2}}},
\end{equation}
where $| \cdot |$ denotes the Euclidean norm. 
The contrast parameter $\lambda > 0$ is used to distinguish locations 
where smoothing should be applied (for $|\bm{\nabla} f| \ll \lambda$, 
we get $g_{\lambda} \to 1$) and locations where it should be reduced 
(for $|\bm{\nabla} f| \gg \lambda$, we obtain $g_{\lambda} \to 0$).


\subsubsection{Nonlinear Diffusion}
\label{subsub:nonlinear}

Alternatively, one can make the diffusivity function $g$ dependent on the 
\emph{evolving} image $u$. This allows to update the locations 
where smoothing is reduced during the evolution, by choosing them based 
on the image $u$, which becomes gradually smoother and less noisy.
The resulting process 
$\partial_t u = \opdiv(g(| \bm{\nabla} u |^2) \bm{\nabla} u)$ 
is \emph{nonlinear}~\cite{PM90}. 
The feedback mechanism throughout the evolution helps steering the process 
to achieve better results.


\subsection{Diffusion for Image Inpainting}
\label{subsec:diff-inp}

Diffusion processes can also be used to fill in missing information in 
images~\cite{Ca88, CS01a, WW06}. Particularly, they allow to reconstruct an 
image from only a small number of pixels by propagating information from known 
to unknown areas~\cite{GWWB08}. The set of known pixels is called the 
\emph{inpainting mask} and is denoted by $K \subset \Omega$.
To recover the image, the information at the unknown locations is computed 
as the steady state ($t \to \infty$) of a diffusion process, while the values 
at mask locations are preserved.
The parabolic inpainting formulation is obtained by modifying 
\labelcref{eq:diff-pde,eq:diff-ic} accordingly:
\begin{alignat}{3}
\label{eq:inp-pde-par}
\partial_t u(\bm{x}, t) &= \opdiv(g \bm{\nabla} u(\bm{x}, t)) 
    \quad &\textnormal{for }& 
        \bm{x} \in \Omega \setminus K, \,\,t \in (0, \infty), \\
u(\bm{x}, t) &= f(\bm{x}) 
    \quad  &\textnormal{for }& \bm{x} \in K\, ,\,t \in [0, \infty), \\
u(\bm{x}, 0) &= 0
    \quad  &\textnormal{for }& \bm{x} \in \Omega\setminus K, \\
\partial_{\bm{n}} u(\bm{x}, t) &= 0 
    \quad &\textnormal{for }& 
        \bm{x} \in \partial \Omega, \,\,t \in (0, \infty).
\end{alignat}
For $g\equiv 1$, \labelcref{eq:inp-pde-par} is the homogeneous diffusion 
PDE~\cite{Ii62} and we talk about \emph{homogeneous diffusion inpainting}
(also called \emph{harmonic inpainting}). 
We almost exclusively consider homogeneous diffusion inpainting in the 
remainder of this paper, so we set $g\equiv 1$ in the following.
Instead of computing the steady state of the parabolic diffusion equation, 
we may solve the corresponding boundary value problem:
\begin{alignat}{3}
\label{eq:inp-pde-ell} -\Delta u(\bm{x}) &= 0 
    \quad &\textnormal{for }& \bm{x} \in \Omega \setminus K, \\
\label{eq:inp-init-ell}  u(\bm{x}) &= f(\bm{x}) 
    \quad &\textnormal{for }& \bm{x} \in K, \\
\partial_{\bm{n}} u(\bm{x}) &= 0 
    \quad &\textnormal{for }& \bm{x} \in \partial \Omega.
\end{alignat}
The problem may be written equivalently using the variational 
formulation
\begin{equation}
\label{eq:harmonic_variational_formulation}
\min_{u}\int_{\Omega}|\nabla u(\bm{x})|^2\,d\bm{x}, 
\text{ such that } u(\bm{x}) = f(\bm{x}) \text{ for } \bm{x} \in K.
\end{equation}
This suggests the interpretation that the inpainting is designed to 
penalize the gradient magnitude of the reconstruction, i.e., it inherently 
promotes smoothness.
In order to simplify the discretization of the boundary value problem 
formulation, we introduce a mask indicator function $c = \mathds{1}_{K}$ 
(we use the term \emph{mask} synonymously for the set $K$ and the function 
$c$), that takes the value 1 at points from $K$ and 0 elsewhere. 
This allows us to combine \labelcref{eq:inp-pde-ell,eq:inp-init-ell} into a 
single equation
\begin{equation}
\label{eq:inp-pde-single}
\bigl(c(\bm{x}) + (1 - c(\bm{x})) (-\Delta)\bigr)u(\bm{x}) = c(\bm{x})f(\bm{x}) 
    \quad \text{for } \bm{x} \in \Omega.
\end{equation}


\subsection{Discrete Homogeneous Diffusion Inpainting}
\label{subsec:discrete-inp}

Since we are working with digital images, the above considerations need to be 
translated to the discrete setting. We therefore discretize the images on a 
regular pixel grid of size $n_x \times n_y$. 
Then we write them as vectors of length $N = n_x n_y$ that are obtained by 
stacking the discrete images column-by-column, e.g., $\bm{f}, \bm{u} \in \R^N$.
Furthermore, let $\bm{L}\in\R^{N\times N}$ denote the five-point stencil 
discretization matrix of the negated Laplacian $(-\Delta)$ with reflecting 
boundary conditions $\partial_{\bm{n}}u(\bm{x}) = 0$ for 
$\bm{x}\in\partial\Omega$. Additionally, let $\bm{C}=\diag(\bm{c})$ be the 
diagonal matrix with the mask vector $\bm{c}\in\{0,1\}^N$ discretizing $c$, 
and let $\bm{I}$ be the $N \times N$ identity matrix. 
Then the discrete version of \labelcref{eq:inp-pde-single} can be formulated 
as the linear system of equations:
\begin{equation}
    \left(\bm{C}+(\bm{I}-\bm{C})\bm{L}\right)\bm{u} = \bm{C}\bm{f},
\end{equation}
and the reconstruction can be written explicitly as
\begin{equation}
\label{eq:discrete_harmonic_bvp_solution}
\bm{u} = \bm{r}(\bm{c}, \bm{f}) = 
    \left(\bm{C}+(\bm{I}-\bm{C})\bm{L}\right)^{-1}\bm{C}\bm{f}. 
\end{equation}
The inverse of the \emph{inpainting matrix} 
$\bm{M}_{\bm{c}} \coloneqq \bm{C}+(\bm{I}-\bm{C})\bm{L}$ exists as long as 
$\bm{C}\ne \bm{0}$~\cite{MBWF11}. To deal with the case 
$\bm{C}=\bm{0}$ 
we define $\bm{r}(\bm{0},\bm{f}) \coloneqq 
\frac{1}{N} \bm{1}^{\T}\bm{f}$, i.e., we take the average.
If we want to approximate 
the image $\bm{f}$ instead of interpolating it over $\bm{C}$, we can 
replace $\bm{C}\bm{f}$ with $\bm{C}\bm{g}$, where
\begin{equation}
\label{eq:tonal-prob}
    \bm{g} \in \argmin_{\bm{h}\,:\,\bm{h}|_{\bar{\bm{c}}} = \bm{0}} 
    \|\bm{r}(\bm{c},\bm{h})-\bm{f}\|^2_2.
\end{equation}
Here $\bm{h}|_{\bm{c}}$ is the restriction of $\bm{h}$ to $\bm{c}$ and 
$\bm{h}|_{\bar{\bm{c}}}$ is the restriction of $\bm{h}$ to the 
complement $\bar{\bm{c}} = \bm{1}-\bm{c}$. The optimization is thus only 
over $\bm{h}|_{\bm{c}}$ since the remainder of the values are irrelevant 
for the inpainting result, so we set them to zero.
The least squares problem is known as the \emph{tonal optimization} problem 
and we discuss its implications for the current work in 
\Cref{ss:MMSE_and_Tonal_Optimization}.
Additionally, we observe that the reconstruction is linear in $\bm{g}$. 
This motivates us to write it as a linear combination of 
basis vectors with weights given by $\bm{g}|_{\bm{c}}$. 
Let $\bm{B}_{\bm{c}} \coloneqq (\bm{M}_{\bm{c}}^{-1})|_{\bm{I}\times \bm{C}}$ 
be the restriction of $\bm{M}^{-1}_{\bm{c}}$ to the columns corresponding to 
non-zeros in $\bm{c}$, and we set $m=\|\bm{c}\|_0$ to be the number of non-zeros 
in $\bm{c}$. By denoting the columns as 
$\left\{\bm{b}_{\bm{c}}^k\right\}_{k=1}^{m}$, i.e., 
$\bm{B}_{\bm{c}} = \left[\bm{b}_{\bm{c}}^1 \, \ldots \, 
\bm{b}_{\bm{c}}^m \right]$, 
we can write the reconstruction as
\begin{equation}
\label{eq:discrete_harmonic_bvp_solution_basis}
\bm{u} = \bm{r}(\bm{c}, \bm{g}) = \bm{M}^{-1}_{\bm{c}}\bm{C}\bm{g} 
= \bm{B}_{\bm{c}}\,\bm{g}|_{\bm{c}} = 
    \sum_{k=1}^{m} (\bm{g}|_{\bm{c}})_k \,\bm{b}^k_{\bm{c}}.
\end{equation}
We see that the columns of $\bm{B}_{\bm{c}}$ are the basis vectors 
induced from $\bm{r}$ and $\bm{c}$. They are also termed 
\emph{inpainting echoes}~\cite{DN01,MHWT+11}. We note that 
inpainting with $\bm{g}|_{\bm{c}} = \bm{f}|_{\bm{c}}$ constructs 
the interpolant over $\bm{c}$ in the space 
$\spanV (\bm{B}_{\bm{c}}) \subseteq \mathbb{R}^N$.
Since the tonal optimization solution can be written as 
$\bm{g}|_{\bm{c}} = (\bm{B}_{\bm{c}})^{+}\bm{f}$, where 
$(\bm{B}_{\bm{c}})^{+}$ is the Moore-Penrose pseudo-inverse, we note that 
$\bm{r}(\bm{c},\bm{g}) = \bm{B}_{\bm{c}}(\bm{B}_{\bm{c}})^{+}\bm{f}$ is 
the orthogonal projection of $\bm{f}$ on the subspace $\spanV (\bm{B}_{\bm{c}}) 
\subseteq \mathbb{R}^N$, i.e., the best approximant of $\bm{f}$ in this 
space.


\section{Our Denoising by Inpainting Framework}
\label{sec:framework}

We now present the basic idea and the framework for denoising by 
inpainting proposed in our conference paper~\cite{APW17}.
Since the framework inherently links inpainting and denoising, it is 
well-suited to study connections between the two tasks.
As previously mentioned, we use diffusion-based inpainting -- 
specifically homogeneous diffusion inpainting -- for image denoising, 
by only keeping a sparse subset of the noisy input data and by 
reconstructing the rest. Inpainting on noisy images differs from the 
classical setting and poses additional challenges. During the inpainting 
process, gray values at mask locations are not altered. As they might 
contain errors from the noise, these mask pixels are less trustworthy 
than inpainted pixels, which combine information from their surrounding 
mask pixels. While we want to exploit the filling-in effect in 
unknown areas, this observation implies that a single inpainted image 
cannot give satisfactory denoising results.
Therefore, we compute multiple inpaintings with different masks and obtain
the final result by averaging them. This ensures that none of the pixels
remain unchanged (unless a pixel is contained in all masks).
In the current work, we further mitigate the issue of noisy mask pixels by
employing tonal optimization (see \Cref{ss:MMSE_and_Tonal_Optimization}).
If we denote the $n$ different masks by $\{\bm{c}^\ell\}_{\ell=1}^{n}$,
we can generate the inpaintings $\{\bm{v}^\ell\}_{\ell=1}^{n}$ via 
\begin{equation}
\bm{v}^\ell = \bm{r}(\bm{c}^\ell, \bm{f}) = 
\left(\bm{C}^\ell+\left(\bm{I}-\bm{C}^\ell\right)\bm{L}\right)^{-1}
\bm{C}^\ell\bm{f}. 
\end{equation}
We obtain the final denoising result $\langle\bm{u}\rangle_n$ by averaging:
\begin{equation}
\label{eq:avg}
\langle \bm{u} \rangle_{n}= \frac{1}{n} \sum_{\ell=1}^{n} \bm{v}^\ell 
= \frac{1}{n}\sum_{\ell=1}^n\bm{r}(\bm{c}^\ell,\bm{f}).
\end{equation}
As we fix the inpainting operator (for a discussion of denoising by 
biharmonic inpainting see \Cref{subsec:exp-biharmonic}), the only freedom 
in the framework lies in the selection of the different masks. 
This is in contrast to the common strategy in denoising, where
all available data is used and the operator is optimized instead.
To study the effects of different data selection strategies, we will 
borrow several ideas from mask optimization for image compression. 
To obtain multiple different masks as our framework requires, 
we rely on some degree of randomness in the mask generation processes 
(see \Cref{sec:adaptive}). Since we make use of stochastic strategies, 
we formalize and study DbI from a probabilistic point of view 
in the following subsection.


\subsection{Probabilistic Theory}
\label{subsec:prob-theory}

As seen in \labelcref{eq:avg}, the denoised image is the result of averaging 
$n$ inpaintings from $n$ different masks, that are generated by some mask 
optimization process. In the following, we interpret this from a 
probabilistic point of view.
This allows us to formalize the DbI framework from our 
conference paper~\cite{APW17} and provides us with tools to study 
and boost the convergence of our methods in \Cref{subsubsec:convergence} and
\Cref{subsubsec:low-discrepancy}, respectively.
We take the masks $\{\bm{c}^\ell\}_{\ell=1}^{n}$ to be independent and 
identically distributed samples from a predetermined distribution  
conditioned on $\bm{f}$, with a conditional probability mass function (PMF) 
$p(\bm{c}|\bm{f})$.
Then the estimator $\bm{u}$ converges to the following conditional 
expectation for $n\to\infty$:
\begin{equation}
\label{eq:expectation_denoising}
\mathbb{E}[\langle\bm{u}\rangle_n|\bm{f}] = 
\mathbb{E}\left[\frac{1}{n}
\sum_{\ell=1}^n\bm{r}(\bm{c}^\ell,\bm{f})\biggr|\bm{f}\right] 
= 
\frac{1}{n}\sum_{\ell=1}^n\mathbb{E}[\bm{r}(\bm{c},\bm{f})|\bm{f}] 
 = 
\sum_{\bm{c}\in\{0,1\}^N}\bm{r}(\bm{c}, \bm{f})\,p(\bm{c}|\bm{f}).
\end{equation}
The second equality holds because the masks were assumed to be identically 
distributed, and thus $\mathbb{E}[\bm{r}(\bm{c}^{\ell},\bm{f})|\bm{f}] = 
\mathbb{E}[\bm{r}(\bm{c},\bm{f})|\bm{f}]$ for any $\bm{c}$ sampled with 
the same PMF $p$. The fourth equality follows from the definition of 
the conditional mathematical expectation.
We note that from this probabilistic point of view, spatial adaptivity is
provided through the design of the PMF $p$.
The following proposition shows that the DbI result constitutes a minimum 
mean squared error (MMSE) estimate. This emphasizes its optimality under
certain assumptions.


\smallskip
\begin{proposition}[DbI as an MMSE Estimate]
\label{thm:dbi-mmse-int} 
The expectation \labelcref{eq:expectation_denoising} of the DbI averaging 
\labelcref{eq:avg} can be interpreted as an MMSE estimate under prior assumptions 
on the image and noise distributions, i.e., it solves the minimization problem
\begin{equation}
\label{eq:MMSE_interpolation_min_formulation}
    \min_{\bm{u}\in\mathbb{R}^N} 
    \mathbb{E}[\|\bm{u}-\bm{w}\|^2_2|\bm{f}]
    =
    \min_{\bm{u}\in\mathbb{R}^N} 
    \mathbb{E}[\|\bm{u}-\bm{r}(\bm{c},\bm{f})\|^2_2|\bm{f}].
\end{equation}
\end{proposition}


\begin{proof}
We can rewrite the minimization problem 
\labelcref{eq:MMSE_interpolation_min_formulation} as
\begin{equation}
    \min_{\bm{u}\in\mathbb{R}^N} 
    \mathbb{E}[\|\bm{u}-\bm{r}(\bm{c},\bm{f})\|^2_2|\bm{f}] 
    = \min_{\bm{u}\in\mathbb{R}^N} 
    \sum_{\bm{c}\in\{0,1\}^N}
    \|\bm{u}-\bm{r}(\bm{c},\bm{f})\|^2_2\,p(\bm{c}|\bm{f}).
\end{equation}
Taking the derivative w.r.t.\ $\bm{u}$ and setting it to zero results in 
the MMSE estimate
\begin{equation}
    \label{eq:MMSE_estimate_interpolation}
    \bm{u}^{\text{MMSE}} = \mathbb{E}[\bm{r}(\bm{c},\bm{f})|\bm{f}] = 
    \sum_{\bm{c}\in\{0,1\}^N}\bm{r}(\bm{c}, \bm{f})\,p(\bm{c}|\bm{f}).
\end{equation}
By \labelcref{eq:expectation_denoising} this is the same as the expectation
$\mathbb{E}[\langle \bm{u}\rangle_n]$ of the DbI estimator 
$\langle \bm{u}\rangle_n$.
\end{proof}


The estimate $\bm{u}^{\text{MMSE}}$ is close to $\bm{f}_r$ (and 
$\langle \bm{u}\rangle_n$ is close 
to $\bm{f}_r$), whenever $\bm{v}=\bm{r}(\bm{c},\bm{f})$ with 
$\bm{c}\sim p(\bm{c}|\bm{f})$ provides a good model for  
the distribution from which $\bm{f}_r$ is assumed to originate. 
This formalization of DbI as an estimator for the MMSE estimate 
therefore provides an additional justification for the DbI framework 
as an image denoising approach.


\subsubsection{MMSE and Tonal Optimization}
\label{ss:MMSE_and_Tonal_Optimization}

The classical DbI formulation \labelcref{eq:avg} from~\cite{APW17} employs 
an \emph{interpolating} inpainting. 
It is natural to extend the framework
to the best \emph{approximating} inpainting, computing the denoised image 
$\langle\bm{u}\rangle_n$ as
\begin{equation}
\label{eq:avg-appr}
\langle\bm{u}\rangle_n = 
\frac{1}{n}\sum_{\ell=1}^n\bm{r}(\bm{c}^\ell,\bm{g^\ell}),
\end{equation}
where the masks $\{\bm{c}^\ell\}_{\ell=1}^{n}$ are selected as before, while
$\{\bm{g}^\ell\}_{\ell=1}^{n}$ are the 
solutions to the corresponding tonal optimization problems 
\labelcref{eq:tonal-prob}.
Next we show that after relaxing assumptions on the gray 
values compared to \Cref{thm:dbi-mmse-int}, the MMSE estimate actually 
corresponds to DbI with an approximating inpainting instead of an 
interpolating one.


\smallskip
\begin{proposition}[DbI with Approximating Inpainting as an MMSE Estimate]
\label{thm:dbi-mmse-appr} 
The DbI result based on a best approximating inpainting 
\labelcref{eq:avg-appr} can also be interpreted as an 
MMSE estimate, assuming that the gray values $\bm{h}$ are now also 
a random variable conditioned on $\bm{f}$.
\end{proposition}


\begin{proof}
Firstly, we note that the minimization problem for the MMSE now differs, as
the expectation has to be taken over the gray values $\bm{h}$ as well:
\begin{equation}
\begin{split}
   \min_{\bm{u}\in\mathbb{R}^N}
    \mathbb{E}[\|\bm{u}-\bm{w}\|^2_2|\bm{f}] 
    =&\min_{\bm{u}\in\mathbb{R}^N}
    \mathbb{E}[\|\bm{u}-\bm{r}(\bm{c},\bm{h})\|^2_2|\bm{f}] \\
    = &\min_{\bm{u}\in\mathbb{R}^N} 
    \sum_{\bm{c}\in\{0,1\}^N} 
    \mathbb{E}[\|\bm{u}-\bm{r}(\bm{c},\bm{h})\|^2_2|\bm{f},\bm{c}]
    \, p(\bm{c}|\bm{f}) \\
    = &\min_{\bm{u}\in\mathbb{R}^N}
    \sum_{\bm{c}\in\{0,1\}^N} \left(\int_{\bm{h}\in\mathbb{R}^{N}}
    \|\bm{u}-\bm{r}(\bm{c},\bm{h})\|^2_2 
    \,p(\bm{h}|\bm{f},\bm{c})\,d\bm{h}\right)
    p(\bm{c}|\bm{f}).
\end{split}
\end{equation}
As before, differentiation w.r.t.\ $\bm{u}$ yields the MMSE estimate
\begin{equation}
    \bm{u}^{\text{MMSE}} = \mathbb{E}[\bm{r}(\bm{c},\bm{h})|\bm{f}] 
    = 
    \sum_{\bm{c}\in\{0,1\}^N} 
    \mathbb{E}[\bm{r}(\bm{c},\bm{h})|\bm{f},\bm{c}]
    \, p(\bm{c}|\bm{f}),
\end{equation}
which is similar to \labelcref{eq:MMSE_estimate_interpolation},
but now contains the expectation
\begin{equation}
    \mathbb{E}[\bm{r}(\bm{c},\bm{h})|\bm{f},\bm{c}] =
    \int_{\bm{h}\in\mathbb{R}^{N}}
    \bm{r}(\bm{c},\bm{h})\,p(\bm{h}|\bm{f},\bm{c})\,d\bm{h}.
\end{equation}
To compute $\mathbb{E}[\bm{r}(\bm{c},\bm{h})|\bm{f},\bm{c}]$, we need 
to know the a posteriori density $p(\bm{h}|\bm{f},\bm{c})$. 
If we assume that the noise is normally distributed $\bm{n} = 
(\bm{r}(\bm{c},\bm{h})-\bm{f}) \sim 
\mathcal{N}(\bm{0},\sigma^2_{\bm{n}}\bm{I})$,
and that the gray values restricted to the mask $\bm{h}|_{\bm{c}}$ 
are normally distributed 
$\bm{h}|_{\bm{c}}\sim\mathcal{N}(\bm{0}, \sigma^2_{\bm{h}|_{\bm{c}}}\bm{I})$, 
then the expectation can be calculated~\cite{LS07} as
\begin{equation}
    \label{eq:estimate_tonal_opt}
    \mathbb{E}[\bm{r}(\bm{c},\bm{h})|\bm{f},\bm{c}] = 
    \bm{B}_{\bm{c}}\,\mathbb{E}[\bm{h}|_{\bm{c}}|\bm{f},\bm{c}]
    = \bm{B}_{\bm{c}}
    \left(\frac{\sigma^2_{\bm{n}}}{\sigma_{\bm{h}|_{\bm{c}}}^2}\bm{I}
    + \bm{B}_{\bm{c}}^{\T}\bm{B}_{\bm{c}}\right)^{-1}\bm{B}_{\bm{c}}^{\T} 
    \bm{f}.
\end{equation}
Since we do not know $\sigma_{\bm{h}|_{\bm{c}}}$ and because the assumption of 
the normality of the gray values may not be a very plausible one, we 
can dispense away with it by taking $\sigma_{\bm{h}|_{\bm{c}}}\to\infty$, 
which results in a tonally optimized inpainting:
\begin{equation}
    \lim_{\sigma_{\bm{h}|_{\bm{c}}}\to\infty}
    \mathbb{E}[\bm{r}(\bm{c},\bm{h})|\bm{f},\bm{c}] = 
    \bm{B}_{\bm{c}}\lim_{\sigma_{\bm{h}|_{\bm{c}}}\to\infty}
    \left(\frac{\sigma^2_{\bm{n}}}{\sigma_{\bm{h}|_{\bm{c}}}^2}\bm{I}
    + \bm{B}_{\bm{c}}^{\T}\bm{B}_{\bm{c}}\right)^{-1}\bm{B}_{\bm{c}}^{\T} 
    \bm{f}
    =
    \bm{B}_{\bm{c}} (\bm{B}_{\bm{c}})^{+}\bm{f}.
\end{equation}
Using $\bm{B}_{\bm{c}}(\bm{B}_{\bm{c}})^+\bm{f} 
= \bm{r}(\bm{c}, (\bm{B}_{\bm{c}})^+\bm{f})$,
the new MMSE estimate differs with \labelcref{eq:MMSE_estimate_interpolation} 
only in that we have approximation instead of interpolation:
\begin{equation}
    \label{eq:MMSE_estimate_approximation}
    \bm{u}^{\text{MMSE}}
    = 
    \sum_{\bm{c}\in\{0,1\}^N} \mathbb{E}[\bm{r}(\bm{c},\bm{h})|\bm{f},\bm{c}]
    \, p(\bm{c}|\bm{f}) 
    =
    \sum_{\bm{c}\in\{0,1\}^N} \bm{r}(\bm{c},(\bm{B}_{\bm{c}})^{+}\bm{f})
    \, p(\bm{c}|\bm{f}).
\end{equation}
This corresponds exactly to the expectation of the approximating DbI 
formulation.
\end{proof}


We note that the above analysis did not require $\bm{r}$ to be linear in 
$\bm{f}$ except for the approximation of $\bm{f}$. Given a fixed $\bm{c}$, 
a natural extension to nonlinear operators could use nonlinear 
least-squares to compute something similar to $\bm{B}_{\bm{c}}^+\bm{f}$.
By using the approximating formulation, we project the image onto the various 
sub-spaces induced by the inpainting operator $\bm{r}$ and the mask $\bm{c}$. 
We will show in \Cref{subsubsec:exp-tonal-optimization} that in practice, 
tonal optimization is able to improve quality and to reduce the variance of 
MMSE denoising, since it mitigates the error from the interpolation of 
noisy mask pixels and provides representations that are closer to $\bm{f}$ 
in terms of MSE.


\subsubsection{Interpreting Tonal Optimization as MAP Estimate} 

Not directly related to the classical averaging formulation of DbI, but 
nevertheless interesting and a valuable extension, is the fact that 
spatial and tonal optimization for a single inpainting can also be 
framed as a maximum a posteriori (MAP) estimate. 
In MAP estimation, instead of minimizing the MSE, we want to 
find an inpainting $\bm{w}$ that maximizes the posterior:
\begin{equation}
    \label{eq:MAP_argmax}
    \argmax_{\bm{w}} p(\bm{w}|\bm{f}) 
    = \argmax_{\bm{c},\bm{h}} p(\bm{h},\bm{c}|\bm{f}) 
    = \argmax_{\bm{c},\bm{h}} p(\bm{f}|\bm{h},\bm{c}) 
    p(\bm{h}|\bm{c}) p(\bm{c}).
\end{equation}
We have assumed that $\bm{w}=\bm{r}(\bm{c},\bm{h})$ is an injection,
so we have $p(\bm{w}|\bm{f}) = p(\bm{r}(\bm{c},\bm{h})|\bm{f}) =
p(\bm{h},\bm{c}|\bm{f})$. In the non-injective case one gets a set 
\begin{equation}
    p(\bm{w}|\bm{f}) = p(\bm{r}^{-1}(\bm{w})|\bm{f}) = 
    p(\{\bm{h},\bm{c}\,:\,\bm{w}=\bm{r}(\bm{c},\bm{h})\}|\bm{f}),
\end{equation}
which does not change the derivation meaningfully, except for introducing 
additional technical details. Thus, for the sake of clarity, we proceed with 
the injective case, but a similar argument holds in the general setting.
The maximization problem (\ref{eq:MAP_argmax}) can be split into two 
optimization problems:
\begin{equation}
    \max_{\bm{c},\bm{h}} p(\bm{f}|\bm{h},\bm{c}) 
    p(\bm{h}|\bm{c}) p(\bm{c}) = 
    \max_{\bm{c}}\left(\max_{\bm{h}} p(\bm{f}|\bm{h},\bm{c}) 
    p(\bm{h}|\bm{c})\right)p(\bm{c}).
\end{equation}
The inner one optimizes over the gray values $\bm{h}$ given a mask $\bm{c}$, 
and the outer one optimizes over the masks $\bm{c}$.
If we again assume that $\bm{f} = \bm{r}(\bm{c},\bm{h}) + \bm{n}$, where 
$\bm{n}\sim\mathcal{N}(\bm{0}, \sigma^2_{\bm{n}}\bm{I})$, then the density 
$p(\bm{f}|\bm{h}, \bm{c})$ is given by a Gaussian
\begin{equation}
    p(\bm{f}|\bm{h}, \bm{c}) = \frac{1}{(2\pi \sigma_{\bm{n}}^2)^{N/2}}
    \exp\left(-\frac{\|\bm{r}(\bm{c},\bm{h})-\bm{f}\|^2_2}
    {\sigma_{\bm{n}}^2}\right).
\end{equation}
Assuming also that the gray values are normally distributed, i.e., 
$\bm{h}|_{\bm{c}}\sim\mathcal{N}(\bm{0},\sigma^2_{\bm{h}|_{\bm{c}}}\bm{I})$, 
then the minimization problem w.r.t. $\bm{h}$ is what we call 
\emph{the regularized tonal optimization problem}:
\begin{equation}
    \argmax_{\bm{h}|_{\bar{\bm{c}}}=\bm{0}}  
    \exp\left(-\frac{\|\bm{r}(\bm{c},\bm{h})-\bm{f}\|^2_2}{\sigma_{\bm{n}}^2}
    - \frac{\|\bm{h}|_{\bm{c}}\|^2_2}{\sigma^2_{\bm{h}|_{\bm{c}}}}\right) 
    = \argmin_{\bm{h}|_{\bar{\bm{c}}}=\bm{0}} 
    \|\bm{B}_{\bm{c}}\bm{h}|_{\bm{c}} - \bm{f}\|^2_2 
    + \frac{\sigma_{\bm{n}}^2}{\sigma^2_{\bm{h}|_{\bm{c}}}} 
    \|\bm{h}_{\bm{c}}\|^2_2,
\end{equation}
where the solution is the same as in \labelcref{eq:estimate_tonal_opt}:
\begin{equation}
    \bm{h}|^*_{\bm{c}} = 
    \left(\frac{\sigma^2_{\bm{n}}}{\sigma_{\bm{h}|_{\bm{c}}}^2}\bm{I} 
    + \bm{B}_{\bm{c}}^{\T}\bm{B}_{\bm{c}}\right)^{-1}
    \bm{B}_{\bm{c}}^{\T} \bm{f}.
\end{equation}
Note that this can already be used for denoising with just a single 
inpainting with a mask $\bm{c}$, provided that we know the ratio of the 
variances of the noise and the gray values. 
The above expression suggests that we can then just apply a regularized 
tonal optimization to get the best MAP estimate. 
As before, we may take $\sigma_{\bm{h}|_{\bm{c}}}\to\infty$ to 
get classical tonal optimization if desired.
Of course, we also need to optimize w.r.t.\ the masks according to 
$p(\bm{c})$. In fact, if we take $p(\bm{c}) = 0$ for $\|\bm{c}\|_0\ne m$, 
and $p(\bm{c})$ being equal for all $\|\bm{c}\|_0=m$,
then we get the spatial optimization problem with tonally optimized  
values:
\begin{equation}
    \min_{\|\bm{c}\|_0 = m} 
    \|\bm{r}(\bm{c},\bm{h}|^*_{c}(\bm{f})) - \bm{f}\|^2_2.
\end{equation}
If we take the interpolating case, we get the classical spatial optimization 
problem~\cite{MHWT+11}:
\begin{equation}
    \min_{\|\bm{c}\|_0=m} \|\bm{r}(\bm{c},\bm{f}) - \bm{f}\|^2_2.
\end{equation}
The above further motivates using spatial optimization for denoising in 
both the interpolation and approximation cases; see \Cref{sec:adaptive}.


\subsubsection{Bayesian Interpretation}

In this subsection, we discuss how the above approaches fit in a
general Bayesian perspective, which allows for meaningful interpretations 
of the occurring probabilities. 
This is valuable as MMSE and MAP estimates rely 
on a posterior $p(\bm{w}|\bm{f})$.
Using Bayes' rule, this posterior can be rewritten as
\begin{equation}
    \label{eq:Bayes_rule}
    p(\bm{w}|\bm{f}) = \frac{p(\bm{f}|\bm{w}) p(\bm{w})}{p(\bm{f})} 
    = \frac{p(\bm{f}|\bm{w}) p(\bm{w})}
    {\int_{\mathbb{R}^N}p(\bm{f}|\bm{w}) p(\bm{w})\,d\bm{w}},
\end{equation}
where $p(\bm{w})$ is the probability density function (PDF) for 
the distribution of images $\bm{w}$ from which we assume 
$\bm{f}_r$ to originate. The likelihood 
$p(\bm{f}|\bm{w})$ is the noise PDF, which in our case is a Gaussian.
The term $p(\bm{f})$ is just a normalization constant that is irrelevant 
in practice, since it is not a function of $\bm{w}$.
This shows that the task of finding a proper posterior distribution 
corresponds to introducing an appropriate prior $p(\bm{w})$ under 
a given noise distribution $p(\bm{f}|\bm{w})$. 
This is known to be crucial for good denoising performance 
of Bayesian methods, and links our DbI framework to such approaches.


\smallskip
\paragraph{Incorporating the Inpainting Operator}

To introduce an inpainting operator $\bm{r}$ into the above model, 
we make the assumption that any $\bm{w}$ is synthesized as 
$\bm{w} = \bm{r}(\bm{c},\bm{h})$ for some mask 
$\bm{c}$ and some gray values $\bm{h}|_{\bm{c}}$.
Since now the model depends on the masks we can rewrite the PDF as
\begin{equation}
    p(\bm{w}|\bm{f}) = \sum_{\bm{c}\in\{0,1\}^N} p(\bm{w}|\bm{f},\bm{c})
    p(\bm{c}|\bm{f}),
\end{equation}
which is where the conditional mask PMF $p(\bm{c}|\bm{f})$ comes into play -- 
this is the other key ingredient for DbI along with the inpainting operator. 
We will see that this PMF allows us to introduce spatial adaptivity 
(\Cref{subsec:analytic}, \Cref{fig:spatial}) 
for operators that are otherwise not spatially adaptive. 
Finally, we can also rewrite $p(\bm{w}|\bm{f},\bm{c})$ using Bayes' rule 
in order to relate the above formulation to \labelcref{eq:Bayes_rule}:
\begin{equation}
    p(\bm{w}|\bm{f},\bm{c}) = 
    \frac{p(\bm{f}|\bm{w},\bm{c})p(\bm{w}|\bm{c})}
    {p(\bm{f}|\bm{c})} = 
    \frac{p(\bm{f}|\bm{w},\bm{c})p(\bm{w}|\bm{c})}
    {\int_{\mathbb{R}^N}p(\bm{f}|\bm{w},\bm{c})p(\bm{w}|\bm{c})\,d\bm{w}}.
\end{equation}
This provides a similar interpretation, but now we have knowledge about 
the mask. As before $p(\bm{f}|\bm{w},\bm{c})$ models 
the noise, but now $p(\bm{w}|\bm{c})$ models the distribution of 
the gray values defining $\bm{w}$ given $\bm{c}$, i.e., the 
distribution of $\bm{h}|_{\bm{c}}$. As before, the denominator 
is a normalization constant that is not practically relevant.


\smallskip
\paragraph{The Mask Posterior}

Bayes' rule allows us to explore further theoretical considerations about 
the involved mask probabilities. We can study the mask posterior 
$p(\bm{c}|\bm{f})$ in more detail, using
\begin{equation}
    p(\bm{c}|\bm{f}) = \frac{p(\bm{f}|\bm{c})p(\bm{c})}{p(\bm{f})}.
\end{equation}
Now $p(\bm{c})$ models the probability of the mask $\bm{c}$ being 
generated (irrespective of $\bm{f}$) and $p(\bm{f}|\bm{c})$ 
models some measure of the noise and image content in relation 
to the mask.
In practice, ideally the density $\bm{1}^{\T}\mathbb{E}[\bm{c}]/N$ 
should be chosen 
to be inversely proportional to the standard deviation of the noise.
Similarly if we know that the noise distribution is space-variant, 
or if we suspect that features (e.g.\ edges) are present, 
we can choose the local density of $\bm{c}$ to account for that: 
higher for more prominent edges, lower for higher noise variance.
The weight of these choices are modeled by $p(\bm{f}|\bm{c})$. 
Selecting $p(\bm{c})$ is less trivial, as it needs to match 
the mask distribution of natural images, i.e., the distribution 
of natural images from the perspective of the masks used in the 
inpainting operator. 
It is simpler to choose it based on the density, i.e., 
$p(\bm{c}) = p(\|\bm{c}\|_0/N)$, which makes it blind to spatial variations,
or to just choose it as a constant, if we have no data on it.
Note that these considerations are meant to provide a different view 
on the mask posterior and an alternative strategy on how 
to construct it.
The adaptive mask selection methods that we consider in this work directly 
induce a mask posterior $p(\bm{c}|\bm{f})$ 
and do not model $p(\bm{f}|\bm{c})$ or $p(\bm{c})$. 
They are based on strategies from the noise-free case in image inpainting and
we adapt and extend them to the noisy case.
For all the approaches that we consider, we state their induced 
PMFs $p(\bm{c}|\bm{f})$ 
(see \Cref{eq:poisson_sampling}, \Cref{thm:pmf-dens}, \Cref{app:error-diff}). 


\smallskip
\paragraph{On the Importance of the Inpainting Operator}

A crucial question is whether an inpainting operator $\bm{r}$ is
suitable for modeling natural images in a sparse and robust manner, 
such that noise can be attenuated by averaging multiple nearby 
representations of a noisy image from a lower-dimensional image manifold. 
For $\bm{r}$ being homogeneous diffusion inpainting, we know that it 
has been used successfully for image compression of natural images with 
low to medium frequencies~\cite{MBWF11}. 
Moreover, we present new results in \Cref{sec:rel-dbi-hd} that relate 
the MMSE estimate to homogeneous diffusion denoising.
The large body of literature on sparse image 
approximation and compression should provide a reasonable selection 
of good inpainting operators $\bm{r}$. 
In the current work we also consider biharmonic inpainting 
(see \Cref{subsec:exp-biharmonic}).


\smallskip
\paragraph{Interplay between the Mask PMF and Homogeneous Diffusion}

The basis vectors $\bm{B}_{\bm{c}}$ for homogeneous diffusion are 
generally low-frequent and smooth, with the local frequency depending 
on the local density of the mask points. For a constant PMF $p(\bm{c}|\bm{f})$, 
i.e., a homogeneous mask density, we get a process similar to isotropic 
homogeneous diffusion, and it is in fact approximately equivalent to it,
as we demonstrate later in 
\Cref{subsec:theoretical} and \Cref{subsec:theoretical_2D}.
As such it also shares its drawbacks, i.e., smoothing equally over image 
structures and noise. More sophisticated denoising methods such as space-variant  
diffusion allow for steering the smoothing away from image structures by 
relying on a guidance image, e.g., the gradient magnitude $|\bm{\nabla}u|$. 
Similarly, we may use the PMF $p(\bm{c}|\bm{f})$ to guide the denoising. 
One instance of a PMF that we consider is inspired by a result for 
mask selection in inpainting. Belhachmi et al.~\cite{BBBW09} have argued 
that the local density of an optimal inpainting mask $\bm{c}$ should 
be proportional to the pixelwise magnitude of the Laplacian $|\bm{L}\bm{f}|$. 
In our setting this translates to constructing a PMF 
$p$ such that $\mathbb{E}[\bm{c}|\bm{f}]\sim|\bm{L}\bm{f}|$; 
see \Cref{subsec:analytic}.


\subsubsection{Convergence}
\label{subsubsec:convergence}

A question which arises is how well the estimator $\langle\bm{u}\rangle_n$ 
approximates the 
MMSE estimate $\bm{u}^{\text{MMSE}} = \mathbb{E}[\langle\bm{u}\rangle_n|\bm{f}]$ 
as a function of 
the number of samples $n$. We consider this scaling behavior in the next 
proposition.


\smallskip
\begin{proposition}[Convergence of the DbI Estimator]
\label{thm:con-dbi} The root mean square error (RMSE) 
$\sqrt{\operatorname{MSE}(\langle\bm{u}\rangle_n,
\mathbb{E}[\langle\bm{u}\rangle_n|\bm{f}])}$ 
between the 
estimator $\langle\bm{u}\rangle_n$ and its expectation 
$\mathbb{E}[\langle\bm{u}\rangle_n|\bm{f}]$ scales 
as $O(n^{-1/2})$, where $n$ is the number of sampled masks.
\end{proposition}


\begin{proof}
We first recall that we can decompose the MSE between some estimator 
$\hat{\bm{\theta}}$ 
and some fixed parameter $\bm{\theta}$ into a variance and a bias part:
\begin{equation}
\begin{split}
\text{MSE}(\hat{\bm{\theta}}, \bm{\theta}) 
&= \mathbb{E}\bigl[\|\hat{\bm{\theta}} - \bm{\theta}\|^2_2\bigr] \\
&= \mathbb{E}\bigl[\|\hat{\bm{\theta}} - 
\mathbb{E}[\hat{\bm{\theta}}]\|^2_2\bigr] + 
\|\mathbb{E}[\hat{\bm{\theta}}] - \bm{\theta}\|^2_2 \\
&= \mathbb{V}[\hat{\bm{\theta}}] + 
\operatorname{Bias}(\hat{\bm{\theta}}, \bm{\theta})^2.
\end{split}
\end{equation}
If we consider the MSE between the estimator $\langle\bm{u}\rangle_n$ and its 
expectation $\mathbb{E}[\langle\bm{u}\rangle_n|\bm{f}]$, the bias vanishes 
and we have $\text{MSE}(\langle\bm{u}\rangle_n, 
\mathbb{E}[\langle\bm{u}\rangle_n|\bm{f}]) 
= \mathbb{V}[\langle\bm{u}\rangle_n|\bm{f}]$.
The variance $\mathbb{V}[\langle\bm{u}\rangle_n|\bm{f}]$ is given by
\begin{equation}
\mathbb{V}[\langle\bm{u}\rangle_n|\bm{f}] = 
\mathbb{V}\left[\frac{1}{n}\sum_{\ell=1}^{n}
\bm{r}(\bm{c}^\ell,\bm{f})\biggr|\bm{f}\right] = 
\frac{1}{n^2}\sum_{\ell=1}^{n}
\mathbb{V}\left[\bm{r}(\bm{c},\bm{f})|\bm{f}\right] = 
\frac{1}{n}\mathbb{V}[\bm{r}(\bm{c},\bm{f})|\bm{f}].
\end{equation}
The second equality holds because the masks are independent and 
identically distributed.
For a finite variance $\mathbb{V}[\bm{r}(\bm{c},\bm{f})|\bm{f}]$, the root 
mean square error between the estimator and its expectation thus scales 
as $O(n^{-1/2})$. 
\end{proof}


\subsubsection{Acceleration by Low-Discrepancy Sequences}
\label{subsubsec:low-discrepancy}

When the masks are random variables, as noted in \Cref{subsubsec:convergence}, 
we have a somewhat slow convergence of $O(n^{-1/2})$. 
Informally this means that to decrease the RMSE by a factor 4
we would need 16 times as many samples. 
The natural question arises whether we 
can do better by trading randomness for a more structured sampling strategy. 
The answer is positive, as in the context of integration (and our problem can 
be framed as such w.r.t.\ the counting measure), a prominent approach for 
speeding up convergence is the use of low-discrepancy sequences. 
These sequences fill up space more uniformly than random sequences. 
The uniformity is typically quantified using the (star) discrepancy of the 
sequence. Theoretically, the Koksma-Hlawka inequality~\cite{KN05} allows one 
to bound the numerical integration error, i.e., 
$\|\langle\bm{u}\rangle_n-\mathbb{E}[\langle\bm{u}\rangle_n|\bm{f}]\|_2$ 
in our case, by using the product of the 
discrepancy of the sequence and the variation of the integrand. 
In practice this usually translates to a convergence that can reach 
as high as $O(n^{-1})$ which is much better than the $O(n^{-1/2})$ 
convergence for the purely random case. Experimental results illustrating 
a boost to the convergence in the DbI setting are presented in 
\Cref{subsec:exp-convergence}.


\section{Linking Denoising by Inpainting to Homogeneous Diffusion}
\label{sec:rel-dbi-hd}

The simplest approaches for mask selection in the DbI framework 
are those, that are independent of the image that is to be filtered 
($p(\bm{c}|\bm{f}) \equiv p(\bm{c})$).
We consider shifted regular masks as well as randomly selected masks.
They are characterized by a spatially flat expectation 
$\mathbb{E}[\bm{c}] = \text{const}$.
In the following, we briefly introduce regular masks, show how they can 
be used in the DbI framework and discuss the resulting 
filtering behavior. 
Then we derive relations between DbI with regular masks and homogeneous 
diffusion filtering in 1D. 
Afterwards, using random masks instead of regular masks, 
we empirically extend those results to the 2-D setting.


\subsection{Regular Masks}
\label{subsec:regular}

Regular masks are created by generating a pattern
with each $r$-th pixel in $x$- and each $s$-th pixel in $y$-direction being 
added to the mask. 
We can then shift such a mask in both directions to obtain multiple masks.
If we assume an $n_x \times n_y$ pixel grid, we can create such a regular 
mask via
\begin{equation}
c_{i, j} = \begin{cases}
1 & \text{if } i \bmod r = 0 \text{ and } j \bmod s = 0, \\
0 & \text{else.}
\end{cases}
\end{equation}
We have $s$ options of shifting this regular mask in $x$-direction and $r$ 
options in $y$-direction, adding up to $n=rs$ total possible configurations.
Denoting by $p \in \{0, \dots, r-1\}$ and $q \in \{0, \dots, s-1\}$ the shift 
in $x$- and $y$-direction, respectively, we can write the shifted masks as 
\begin{equation}
c^{ps+q+1}_{i, j} = \begin{cases}
1 & \text{if } i \bmod r = p \text{ and } j \bmod s = q, \\
0 & \text{else.}
\end{cases}
\end{equation}
Clearly, the created masks are independent of the image. 
Furthermore, the mask density is constant over the entire image, leading to 
the same smoothing strength at all locations, solely determined by the total 
mask density, i.e., by the spacing. If $r=s$, this smoothing is 
equally strong in $x$- and $y$-direction.
Visually one then observes a smoothing behavior that resembles the one of 
homogeneous diffusion filtering 
(see \Cref{fig:regular}(c) and \Cref{fig:regular}(d)).
The influence of the mask density on the smoothing strength can be observed 
in \Cref{fig:regular}(d) and \Cref{fig:regular}(e).


\begin{figure}[tbhp]
\centering
\tabcolsep2pt
\begin{tabular}{ccccc}
\textbf{(a) original} & \textbf{(b) noisy} & 
\textbf{(c)	HD} & \textbf{(d) DbI-R} & \textbf{(e) DbI-R} \\[1.2mm]
\includegraphics[width=0.185\linewidth]{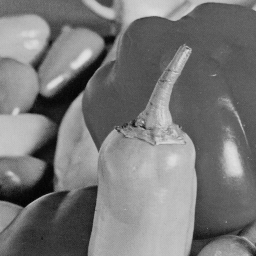} & 
\includegraphics[width=0.185\linewidth]{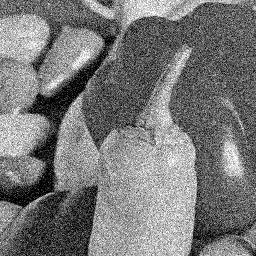} & 
\includegraphics[width=0.185\linewidth]
{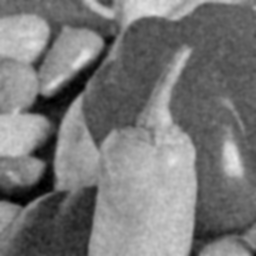} & 
\includegraphics[width=0.185\linewidth]
{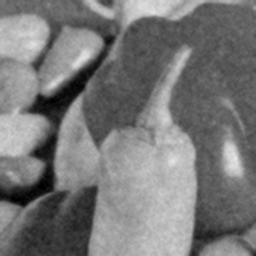} & 
\includegraphics[width=0.185\linewidth]
{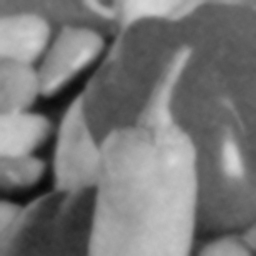} \\[-0.8mm]
\emph{peppers} & $\sigma_n=30$ & $T=1.35$ & $r=s=3$ & $r=s=5$
\end{tabular}
\caption{\label{fig:regular} Comparison of homogeneous diffusion (HD) 
and denoising by inpainting with regular masks (DbI-R) 
on the test image \emph{peppers} with $\sigma_n=30$. 
\Cref{fig:regular}(c) and \Cref{fig:regular}(d) show the visual 
similarities of both methods. 
\Cref{fig:regular}(d) and \Cref{fig:regular}(e) 
illustrate the influence of the expected density 
$\bm{1}^{\T}\mathbb{E}[\bm{c}]/N$ on the smoothness of the 
reconstruction: \Cref{fig:regular}(e) was intentionally chosen 
with a density that is too low resulting in too much smoothing.}
\end{figure}


The similarity between the methods can not only be observed visually, but also 
established theoretically: 
Next we provide a derivation in the 1-D case for regular 
masks relating the diffusion time of homogeneous diffusion 
to the mask density in DbI. 


\subsection{Mathematical Analysis in 1D}
\label{subsec:theoretical}

We consider a discrete 1-D signal $\bm{f}$ and regular inpainting masks 
with spacing $r$ and shift $p \in \{0, \dots, r-1\}$.
It is known that in 1D, homogeneous diffusion inpainting and linear 
interpolation are equivalent. 
Thus, an inpainted pixel at position $i$ can be described in terms of its two 
neighboring mask pixels. We denote the distance between the pixel $i$ and its 
neighboring mask pixel on the left by $\ell \coloneqq |i-p| \bmod r$, 
which implies that for mask pixels we have $\ell=0$. Accordingly, the distance
to the mask pixel on the right is given by $r-\ell$.
The interpolated value at pixel $i$ for mask $\ell+1$ is then
\begin{equation}
v_i^{\ell+1} = \frac{r-\ell}{r} f_{i-\ell} + \frac{\ell}{r} f_{i+r-\ell}.
\end{equation}
To obtain the final result, the inpaintings from the $r$ shifted masks are 
averaged. We get
\begin{equation}
\begin{split}
\label{eq:regular-mask}
u_i = \frac{1}{r} \sum_{\ell=0}^{r-1} v_i^{\ell+1} &= 
\frac{1}{r} \left(f_i + \sum_{\ell=1}^{r-1} \frac{r-\ell}{r} f_{i-\ell} + 
\frac{\ell}{r} f_{i+r-\ell}\right) \\
&= \frac{1}{r^2} \left(r f_i + \sum_{\ell=1}^{r-1} \ell \left(f_{i-(r-\ell)} +
f_{i+(r-\ell)}\right)\right),
\end{split}
\end{equation}
where the last line reveals the general form of the filter in dependence of 
the spacing $r$: The filter is given by a hat kernel with central weight 
$1/r$ and width $2r-1$. In \Cref{thm:mddt} we demonstrate that this kernel 
can be seen as a consistent discretization of 
$\partial_t u = \partial_{xx}u$. 
Consequently, convolution with such a kernel approximates Gaussian 
smoothing, which explains the visual similarity of the results 
in \Cref{fig:regular}.
Since the spacing $r$ determines the size of the smoothing kernel, 
we explicitly see the connection between the mask density and the 
smoothing strength.
For the special case of $r=2$, \labelcref{eq:regular-mask} yields
\begin{equation}
\label{eq:standard_explicit_hd}
u_i = \frac{f_{i-1} + 2 f_i + f_{i+1}}{4},
\end{equation}
which is exactly a single step of an explicit scheme for homogeneous 
diffusion with step size $T = \frac{1}{4}$ and initial signal $\bm{f}$ 
(assuming grid size $h=1$).
If we reformulate \labelcref{eq:regular-mask} in a way that resembles an 
explicit scheme for homogeneous diffusion, we can derive a general connection 
between the spacing $r$ (and thus the density) of denoising by inpainting 
with regular masks and the time step size of such an explicit scheme, which
we state in \Cref{thm:mddt}.


\smallskip
\begin{theorem}[Connection between Mask Density and Diffusion Time]
\label{thm:mddt} Given the $r$ shifted regular inpainting 
masks in 1D, each of density $d = 1/r$, denoising by inpainting 
approximates explicit homogeneous diffusion at time 
\begin{equation}
    \boxed{T = \frac{1-d^2}{12 d^2}}
\end{equation}
\end{theorem}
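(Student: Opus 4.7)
The key observation is that \cref{eq:regular-mask} expresses $u_i$ as a convolution of $\bm{f}$ with a symmetric hat kernel, and any such convolution approximates one explicit step of $\partial_t u = \partial_{xx}u$ with step size equal to half the kernel's second moment. So the proof reduces to a moment computation.

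\textbf{Step 1 (rewrite the kernel).} Substituting $m = r-\ell$ in \cref{eq:regular-mask}, I would rewrite
\begin{equation*}
u_i \;=\; \sum_{m=-(r-1)}^{r-1} K_m\, f_{i+m}, \qquad K_0 = \frac{1}{r}, \quad K_{\pm m} = \frac{r-m}{r^2} \;\; (1 \le m \le r-1).
\end{equation*}
This is the standard triangular/hat kernel of half-width $r-1$. A short check using $\sum_{m=1}^{r-1}(r-m) = r(r-1)/2$ shows $\sum_m K_m = 1$, and symmetry $K_{-m}=K_m$ is manifest.

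\textbf{Step 2 (Taylor expansion).} Assuming unit grid spacing $h=1$ and writing $f_{i+m} = f_i + m f'_i + \tfrac{m^2}{2} f''_i + O(m^3)$, the symmetry kills all odd moments, and normalization gives
\begin{equation*}
u_i \;=\; f_i \;+\; \tfrac{1}{2} M_2 \, f''_i \;+\; O(h^4),
\qquad M_2 \coloneqq \sum_m m^2 K_m .
\end{equation*}
A single explicit (forward Euler) step of $\partial_t u = \partial_{xx} u$ with time step $T$ has precisely the expansion $f_i + T f''_i + O(T^2)$, so identifying coefficients gives the equivalence at time $T = M_2/2$.

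\textbf{Step 3 (second moment).} It remains to compute $M_2$. Using symmetry and the closed forms $\sum_{m=1}^{r-1} m^2 = \tfrac{(r-1)r(2r-1)}{6}$ and $\sum_{m=1}^{r-1} m^3 = \tfrac{(r-1)^2 r^2}{4}$,
\begin{equation*}
M_2 \;=\; \frac{2}{r^2}\Bigl(r\sum_{m=1}^{r-1} m^2 - \sum_{m=1}^{r-1} m^3\Bigr)
\;=\; 2(r-1)\Bigl(\tfrac{2r-1}{6} - \tfrac{r-1}{4}\Bigr)
\;=\; \frac{r^2-1}{6}.
\end{equation*}
Hence $T = \frac{r^2-1}{12}$, and inserting $r = 1/d$ yields $T = \frac{1-d^2}{12d^2}$, as claimed. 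As a sanity check, $r=2$ gives $T=1/4$, which matches \cref{eq:standard_explicit_hd}.

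\textbf{Expected obstacle.} The computation itself is routine; the subtlety is being precise about the meaning of \emph{approximates}. I would phrase it as a consistency statement: the kernel $K$ and the forward Euler heat-equation propagator at time $T=(r^2-1)/12$ agree to second order in the Taylor expansion (equivalently, their Fourier symbols agree up to $O(\xi^4)$), so both act identically on sufficiently smooth signals to leading order. One should briefly note that for $r\ge 3$ the hat kernel is strictly not a single explicit Euler step (its support is too wide), so the statement is necessarily approximate, not exact; the error is $O(h^4 f^{(4)})$ after one application.
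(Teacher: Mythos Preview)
Your proposal is correct and follows essentially the same approach as the paper: both Taylor-expand the hat kernel from \cref{eq:regular-mask} and reduce the identification of $T$ to the same polynomial sum. The only organizational difference is that the paper first rewrites $u_i - f_i$ as a weighted combination of central second-difference stencils $\frac{f_{i-(r-\ell)} - 2f_i + f_{i+(r-\ell)}}{(r-\ell)^2}$ and then sums the weights $\sum_{\ell=1}^{r-1}\frac{\ell(r-\ell)^2}{r^2} = \frac{r^2-1}{12}$, whereas you go directly through the second moment $M_2$ of the kernel; after the substitution $m = r-\ell$ these are literally the same sum.
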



\begin{proof}
In \labelcref{eq:regular-mask} we derived the general form of the filter 
corresponding to denoising by inpainting with regular masks of spacing $r$ as 
\begin{equation}
u_i = \frac{1}{r^2} \left(r f_i + \sum_{\ell=1}^{r-1} \ell 
\left(f_{i-(r-\ell)} + f_{i+(r-\ell)}\right) \right).
\end{equation}
We can rewrite this in the following manner:
\begin{equation}
\begin{split}
u_i &= \frac{1}{r^2} \left(r f_i + \sum_{\ell=1}^{r-1} \ell 
\left(f_{i-(r-\ell)} + f_{i+(r-\ell)}\right) \right) \\
&= \frac{1}{r^2} \left( r^2 f_i -  2 \sum_{\ell=1}^{r-1} \ell f_i + 
\sum_{\ell=1}^{r-1} \ell \left(f_{i-(r-\ell)} 
+ f_{i+(r-\ell)} \right) \right) \\
&= f_i + \frac{1}{r^2} \sum_{\ell=1}^{r-1} \ell 
\left(f_{i-(r-\ell)} - 2 f_i + f_{i+(r-\ell)} \right),
\end{split}
\end{equation}
where we have used that $\sum_{\ell=1}^{r-1} \ell = \frac{(r-1)r}{2}$. 
Then we may write
\begin{equation}
\begin{split}
u_i - f_i &= 
\frac{1}{r^2} \sum_{\ell=1}^{r-1} \ell 
\left( f_{i-(r-\ell)} - 2 f_i + f_{i+(r-\ell)} \right) \\
&= \frac{1}{r^2} \sum_{\ell=1}^{r-1} \ell (r-\ell)^2 \, 
\frac{ f_{i-(r-\ell)} - 2 f_i + f_{i+(r-\ell)}}{(r-\ell)^2} \\
&= \sum_{\ell=1}^{r-1}  \frac{\ell (r-\ell)^2}{r^2}  \, 
\frac{ f_{i-(r-\ell)} - 2 f_i + f_{i+(r-\ell)}}{(r-\ell)^2}.
\end{split}
\end{equation}
By approximating $f_{i \pm (r-\ell)}$ via a Taylor expansion and using 
the sampling distance $h$, we can derive the time step size as
\begin{equation}
\begin{split}
u_i - f_i 
&= \sum_{\ell=1}^{r-1}  \frac{\ell (r-\ell)^2}{r^2}  \, 
\frac{ f_{i-(r-\ell)} - 2 f_i + f_{i+(r-\ell)}}{(r-\ell)^2} \\
&= \sum_{\ell=1}^{r-1} \left(\frac{\ell (r-\ell)^2}{r^2} \right) 
\left(h^2 \left. d_{xx} f  \right|_{i} + \frac{h^4 (r-\ell)^2}{12} 
\left. d_{xxxx} f \right|_{i} + \mathcal{O}(h^6) \right) \\
&= h^2 \sum_{\ell=1}^{r-1} \left(\frac{\ell (r-\ell)^2}{r^2} \right) 
\left( \left. d_{xx} f  \right|_{i} +  \mathcal{O}(h^2) \right) \\
&\approx h^2 \sum_{\ell=1}^{r-1} \left(\frac{\ell (r-\ell)^2}{r^2} \right) 
\left. d_{xx} f  \right|_{i}.
\end{split}
\end{equation}
We end up with an approximation of an explicit scheme with time step size
\begin{equation}
T = h^2 \sum_{\ell=1}^{r-1} \frac{\ell (r-\ell)^2}{r^2} 
= \frac{h^2 (r^2-1)}{12}.
\end{equation}
Using that the density is the inverse of the grid spacing and setting
$h=1$, we derive the final relation between $T$ and the density $d$, given by 
\begin{equation}
T = \frac{1-d^2}{12 d^2}.
\end{equation}
\end{proof}


\subsection{Empirical Extension to 2D}
\label{subsec:theoretical_2D}

To derive the relationship to the diffusion time in the 1-D 
case we used the fact that the solution of the Laplace equation 
with Dirichlet boundaries is given by linear interpolation. 
That is, we know the closed form of the inpainting echoes in 1D. 
In 2D a closed form solution for those is not known, however they 
may be computed numerically. Thus our goal is to establish a 
relationship between the diffusion time and the density numerically.

We take as a starting point the ansatz from the 1-D case 
that the diffusion time $T$ is given as $\frac{1-d^2}{12d^2}$, 
but generalize it to the form 
$T\approx \frac{1-d^{\gamma}}{\beta d^{\gamma}}$. 
Provided that this conjecture is correct we only need to find 
the constants $\beta$ and $\gamma$.
Since regular masks only allow for a stepwise adaptation of 
the mask density, they are not well-suited for generating a 
large number of data points at different densities. Therefore, 
we use uniform random masks instead, which also have a spatially flat 
expectation, i.e., $\mathbb{E}[\bm{c}] = \text{const}$. 

First we numerically tabulate the relationship between the 
density and the diffusion time. That is, given a density $d$ 
we find the diffusion time $T(d)$ which minimizes the difference 
between the filter matrices:
\begin{equation}
    T(d) = \underset{T\geq 0}{\argmin}
    \,\|\bm{A}_{DbI}(d) - \bm{A}_{HD}(T)\|^2_F.
\end{equation}
Here $\|\cdot\|_F$ is the Frobenius norm, and the matrices 
are the DbI filter matrix resulting from a probability mass 
function for masks with expected density $d$, and the matrix
modeling homogeneous diffusion at time $T$ using 
an implicit Euler discretization:
\begin{align}
    \bm{A}_{DbI} (d) &\coloneqq \mathbb{E}\left[\left(\bm{C}+
\left(\bm{I}-\bm{C}\right)\bm{L}\right)^{-1}\bm{C}\right], 
\quad \frac{1}{N}\bm{1}^{\T}\mathbb{E}[\bm{c}] = d, \\
\bm{A}_{HD}(T) &\coloneqq \left(\bm{I}+T \bm{L}\right)^{-1}.
\end{align}
We estimate $\bm{A}_{DbI}$ using 1024 sampled masks.
Then, having the relationship $d \mapsto T(d)$ we find that 
$T(d) \approx \frac{1-d^{\gamma}}{\beta d^{\gamma}}$ for 
$\beta = 4.58$, $\gamma = 1.3$, which is illustrated in 
\Cref{fig:2d}. Note the high quality of the data fit, which 
confirms the accuracy of the derived relation.


\begin{figure}[tbhp]
\centering
\tabcolsep1pt
\begin{tabular}{cc}
\includegraphics[width=0.48\linewidth]
{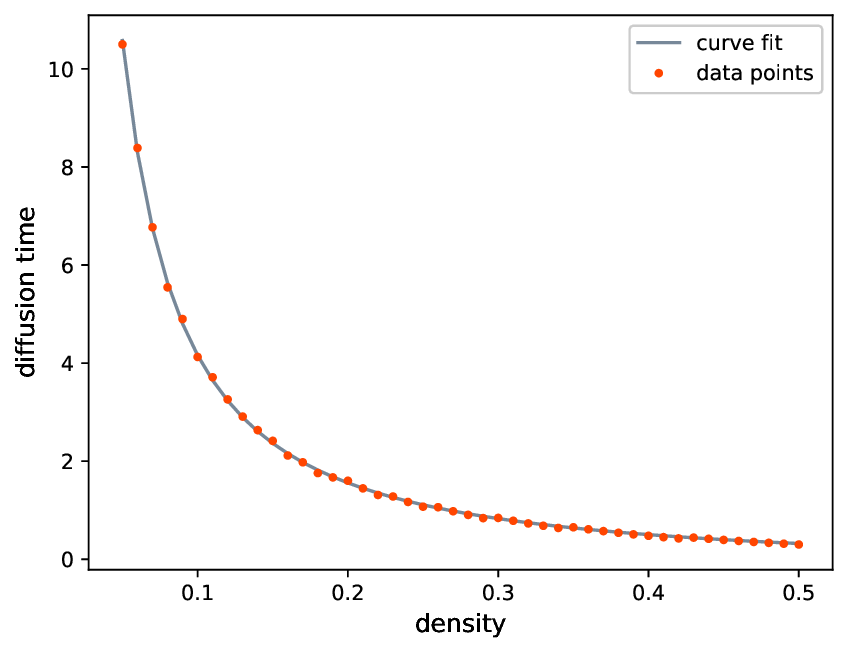}
\end{tabular}
\caption{\label{fig:2d} The fit based on the ansatz 
$\frac{1-d^{\gamma}}{\beta d^{\gamma}}$ with $\beta=4.58$, 
$\gamma = 1.3$ and the tabulated correspondence 
between density and diffusion time. The results are obtained with
denoising by inpainting with uniform random masks and an implicit scheme
for homogeneous diffusion.
They show that also in 2D our ansatz captures the relation between 
mask density and diffusion time very accurately.}
\end{figure}


\section{Spatial Optimization for Denoising by Inpainting}
\label{sec:adaptive}

As we have seen in \Cref{sec:rel-dbi-hd}, the use of non-adaptive masks 
restricts the DbI framework, as it entails a non-adaptive smoothing behavior.
Furthermore, our results from \Cref{subsec:prob-theory} emphasize the 
importance of spatial optimization in the context of image denoising.
In~\cite{APW17}, an adaptive mask selection approach enables the framework to
perform edge-preserving image filtering, although the simple homogeneous  
diffusion inpainting operator by itself is space-invariant. 
This approach thus implies a different paradigm for image denoising:
\emph{Instead of optimizing the denoising operator, one can optimize the
data.} In this section, we will first present the strategy that was proposed
in~\cite{APW17}. Then we propose an alternative, simpler approach that 
eventually gains its power by the application of tonal optimization.


\subsection{Densification Method}
\label{subsec:probabilistic}

Two well-known mask selection strategies from image compression are 
probabilistic sparsification \cite{MHWT+11} and densification \cite{HMWP13}, 
which build the mask in an iterative way using a top-down and 
a bottom-up strategy, respectively.

In probabilistic sparsification, we start with a full mask and take away 
the least important pixels from a number of randomly selected candidates 
in each iteration.
To identify those pixels, we temporarily exclude all candidates from the 
mask and compute an inpainting. Then the candidate locations with the 
highest local (i.e., pixelwise) reconstruction error are added back to the 
mask as they are assumed to be the most important, while the others remain
permanently excluded. This process is repeated until the desired mask 
density is reached.
In probabilistic densification, the initial mask is empty and again a number 
of candidate pixels are selected. Given an inpainting with the current mask 
(in the first step some pixels have to be chosen at random) 
we select and add those candidates to the mask that have the highest 
local reconstruction error.

In the noisy setting, special care is required as the pixel selection 
based on the \emph{local} reconstruction error is not reliable. 
The local error does not allow the algorithm to distinguish 
between noise and important image structures, such as edges. If a
pixel contains strong noise, this creates a large local error 
because -- just like edges -- the noise cannot be 
reconstructed by the smooth inpainting. Introducing 
such a noisy pixel into the mask is not desirable.
We cure this problem by judging the importance 
of a pixel based on its effect on the \emph{global} 
reconstruction error. We do this by calculating a 
full inpainting for each candidate pixel. While this improves 
the quality of the selected mask, it drastically increases the run time.

Even though in the noise-free setting densification and sparsification yield 
results of comparable quality \cite{Ho16}, this is different when 
handling noisy data. 
For sparsification we initially have very dense masks. If we exclude candidate 
pixels from such masks, the reconstructions often only differ at the locations 
of these pixels. Therefore, sparsification tends to keep noisy pixels in 
the mask, even when a global reconstruction error is computed. 
This problem does not occur in probabilistic densification, as for a sparse 
mask, the candidate pixels have a global influence. 
The result of this effect is illustrated in \Cref{fig:densification}. 
Here, densification is able to select appropriate pixels that lead to an 
almost perfect result while sparsification fails to reconstruct the image 
properly.

Thus, we opt for a \emph{probabilistic densification} algorithm based on a 
\emph{global} error computation, which is described in 
\Cref{alg:densification} and has been proposed in \cite{APW17}.
An additional advantage of this probabilistic densification method is that 
it does not only select pixels at useful locations (e.g., close to edges), 
but also implicitly avoids picking pixels that are too noisy, as they would 
have a negative impact on the reconstruction quality.
The method can be interpreted according to the probabilistic mask generation 
framework from \Cref{sec:framework}, and we provide the implied mask 
probabilities in the following proposition.


\smallskip
\begin{proposition}[Mask Probabilities implied by the Densification
Method]
\label{thm:pmf-dens}
A mask $\bm{c}$ generated by probabilistic densification has the 
conditional probability density function
\begin{equation}
    p(\bm{c}|\bm{f}) = 
    \sum_{\sigma \in S_m}p_{\sigma}(\bm{c}|\bm{f}),
\end{equation}
where $m=\|\bm{c}\|_0$ is the number of mask pixels, 
the sum is taken over the group $S_{m}$ of permutations of 
the ordering of the $m$ mask pixels, and $p_{\sigma}(\bm{c}|\bm{f})$ 
denotes the probability that the $m$ mask points were introduced 
in the order $\sigma$. The latter is the product of the probabilities 
of selecting one mask pixel at each step:
\begin{equation}
    p_{\sigma}(\bm{c}|\bm{f}) = p^m_{\sigma}(\bm{c}|\bm{f}) 
    \ldots p^1_{\sigma}(\bm{c}|\bm{f}).
\end{equation}
The probability of picking the $k$-th mask pixel (according to 
the permutation $\sigma$) at step $k$ has the following form:
\begin{equation}
    p^k_{\sigma}(\bm{c}|\bm{f}) = 
    \sum_{\beta=1}^{\alpha}\frac{1}{\beta}
    \frac{\binom{N_{eq}-1}{\beta-1} \binom{N_{gt}}{\alpha-\beta}}
{\binom{N-k}{\alpha}},
\end{equation}
where $\alpha$ is the number of candidates considered per step, 
$N_{gt}$ is the number of non-mask pixels at step $k$ that would have 
resulted in an inpainting with a higher MSE if they were chosen instead of the 
$k$-th mask pixel in $\sigma$, and $N_{eq}$ is the number of non-mask pixels 
that would have resulted in the same MSE.

\end{proposition}


\begin{proof}
We present the proof of this result in \Cref{app:densification}.
\end{proof}


\begin{figure}[tbhp]
\centering
\tabcolsep10pt
\fboxsep=0.5pt
\begin{tabular}{ccc}
\textbf{(a) input} & \textbf{(b) sparsification} & 
\textbf{(c)	densification} \\[1.2mm]
\includegraphics[width=0.24\linewidth]
{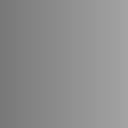} & 	
\includegraphics[width=0.24\linewidth]
{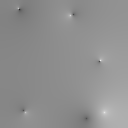} &
\includegraphics[width=0.24\linewidth]
{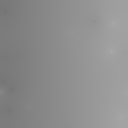} \\[-0.8mm]
original & MSE: $76.07$  & MSE: $1.98$ \\[1.2mm]
\includegraphics[width=0.24\linewidth]
{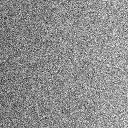} & 
\fbox{\includegraphics[width=0.24\linewidth]
{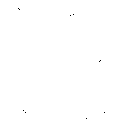}} &
\fbox{\includegraphics[width=0.24\linewidth]
{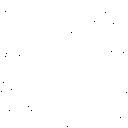}} \\[-0.8mm]
noisy ($\sigma_n=30$) & optimized mask & optimized mask
\end{tabular}
\caption{\label{fig:densification} Comparison of sparsification 
and densification on a synthetic test image with 
$\sigma_n=30$~\cite{APW17}.
For both methods, the mask density $d$ was optimized with a grid search 
w.r.t. the MSE. 
The noisy gradient image is not reconstructed adequately by sparsification, 
since it favors keeping noisy pixels in the first iterations due to 
localization. Densification does not suffer from this problem and thereby 
achieves a better denoised reconstruction.}
\end{figure}


\begin{algorithm}[tbhp]
\caption{Mask densification with global error computation \cite{APW17}.
\label{alg:densification}}
\small
\begin{description}
\hrule\vspace{2mm}
\item[\emph{Input:}]
Noisy image $\bm f \in \R^{N}$, number of candidates $\alpha$, desired 
final mask density $d$.
\item[\emph{Initialization:}]
Mask $\bm c=0$ is empty.
\item[\emph{Compute:}]\hfill\\
\textbf{do} 
\begin{enumerate}
\item[1.] Choose randomly a set $A \subset \{k \in \{1,...,N\} \, | \, 
c_k = 0\}$  with $\alpha$ candidates.\\
\textbf{for all} $i \in A$ \textbf{do} 
\begin{enumerate}
\item[2.] Set temporary mask $\bm m^{i}$ such that $\forall k \in 
\{1,...,\alpha\} \setminus \{i\}: m^{i}_k = c_k$, $m^{i}_i=1$.
\item[3.] Compute reconstruction $\bm u^i$ from mask $\bm m^{i}$ 
and image data $\bm f$.
\end{enumerate}
\textbf{end for}	
\item[4.] Set $\bm c = \underset{\bm m^{i}, \, i \, \in A}{\argmin}
\,\textnormal{MSE}(\bm u^{i}, \bm f)$. This adds one mask point to $\bm c$.
\end{enumerate}
\textbf{while} pixel density of $\bm c$ smaller than $d$.
\item[\emph{Output:}]
Mask $\bm c$ of density $d$.\\
\vspace*{-2mm}\hrule
\end{description}
\end{algorithm}


\subsection{Acceleration via the Analytic Results of Belhachmi et al.}
\label{subsec:analytic}

As the global error computation in the previous approach requires calculating 
an inpainting for each candidate pixel, the run time is substantial.
Therefore, we propose another approach, with the goal of a faster mask 
generation process. We refer to this method as the \emph{analytic
method}. It is based on the results of Belhachmi et al.~\cite{BBBW09}.
They have shown that the mask density for 
homogeneous diffusion inpainting should be proportional to the pointwise 
magnitude of the Laplacian $|\bm{L}\bm{f}|$. 
Additionally, they suggest using the Gaussian-smoothed version 
$\bm{f}_{\sigma} \coloneqq \bm{K}_{\sigma} \ast \bm{f}$ of $\bm{f}$ even in 
the noise-free setting. Here $\bm{K}_{\sigma}$ is a discrete approximation of 
a Gaussian with standard deviation $\sigma$. 
This step proves even more beneficial in our setting, since we are calculating 
the Laplacian of noisy data, and regularizing $\bm{f}$ helps considerably 
for constructing a reasonable guidance image $|\bm{L}\bm{f}_{\sigma}|$.

As we require multiple different binary masks for our framework, we sample 
from $|\bm{L} \bm{f}_{\sigma}|$ by using a simple and fast Poisson sampling.
Given a density image $\bm{d}\in [0,1]^N$, we can sample a mask according to 
it by generating a uniform random number $v_i\sim U[0,1]$ for each pixel 
$i$ and then thresholding at $d_i$:
\begin{equation}
c_i = \begin{cases} 1 & 
\text{if } v_i \leq d_i,\\ 0 &
\text{if } v_i > d_i.\end{cases}
\end{equation}
Then the probability mass function $p_{\bm{d}}$ for sampling a mask $\bm{c}$ 
given the density image $\bm{d}$ is
\begin{equation}
\label{eq:poisson_sampling}
p_{\bm{d}}(\bm{c}) = \frac{1}{P}\prod_{i=1}^N (d_i)^{c_i}(1-d_i)^{1-c_i}, 
\quad P = \sum_{\bm{c}\in\{0,1\}^N}\prod_{i=1}^N (d_i)^{c_i}(1-d_i)^{1-c_i}.
\end{equation}
By construction the mask would have an expected density equal to the mean 
value of $\bm{d}$. 
In our approach we set the per pixel probabilities to 
\begin{equation}
\bm{d} = \min\left\{C|\bm{L}\bm{f}_{\sigma}|,\bm{1}\right\},
\label{eq:belhachmi}
\end{equation}
where the minima are taken pointwise, 
and $C$ is a constant chosen such that the mean value of $\bm{d}$ is equal 
to the desired mask density. 
\Cref{fig:pipeline} shows the pipeline for mask generation with this method. 
One can observe in \Cref{fig:pipeline}(b) that $\bm{d}$ is strongly affected 
by the noise despite the pre-smoothing. 
This is because we calculate second-order derivatives that are even more 
sensitive to noise. When sampling from this image the mask is drawn towards 
noisy pixels. 
To counteract this, we propose to perform an additional outer smoothing of 
the probability image $\bm{d}$, after the absolute value of the Laplacian is 
taken, thus modifying it to 
\begin{equation}
\bm{d} = \min\left\{C \left(\bm{K}_{\rho} \ast |\bm{L}\bm{f}_{\sigma}|\right), 
\bm{1}\right\},
\label{eq:belhachmi-postsm}
\end{equation}
with a post-smoothing parameter $\rho$.
Our proposed selection strategy offers an instant generation of adaptive 
masks, in a sense that it does not require the calculation of any inpainting. 
Furthermore, it provides a transparent formulation of the mask PMF
(see \labelcref{eq:poisson_sampling}) and as such exhibits a specifically 
simple interpretation in the context of our probabilistic framework in
\Cref{subsec:prob-theory}.
On the other hand, contrary to probabilistic densification it does not have 
a mechanism to avoid noisy mask pixels.
To obtain the best possible results, the pre-smoothing parameter $\sigma$, 
the post-smoothing parameter $\rho$, and the desired mask density have to be 
optimized depending on the image content and the noise level.

Note that Belhachmi et al.~\cite{BBBW09} apply Floyd-Steinberg 
dithering~\cite{FS76}, which includes an error diffusion in the binarization 
process.
This strategy can be equipped with a random component in order to generate 
multiple masks, which makes it an alternative to Poisson sampling for us.
We have tested both methods and found that there is no advantage in using 
Floyd-Steinberg dithering. Thus, we opt for the simple Poisson sampling.
Nonetheless, we give the mask probabilities for sampling with 
error diffusion methods in \Cref{app:error-diff}.


\begin{figure}[tbhp]
\centering
\includegraphics[width=\textwidth]{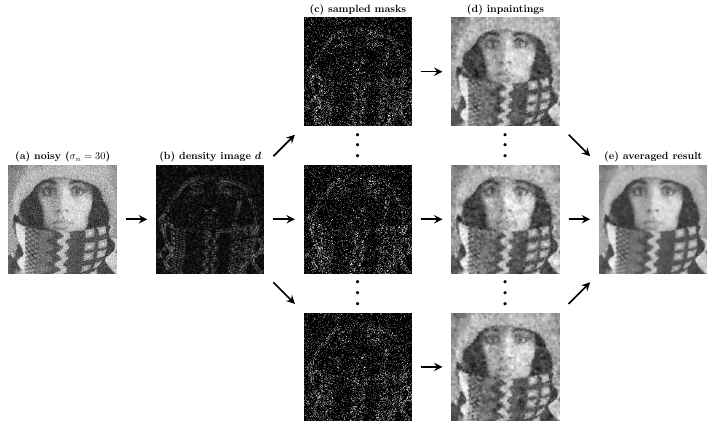} 
\caption{\label{fig:pipeline}
Pipeline for mask generation with the analytic method. 
(a) Test image \emph{trui} with $\sigma_n=30$. 
(b) Target image (without post-smoothing) from which masks are sampled. 
(c) Three examples of Poisson-sampled masks. 
(d) Corresponding homogeneous diffusion inpaintings. 
(e) Averaged inpaintings (from 32 masks), final denoising result.}
\end{figure}


\section{Experiments}
\label{sec:experiments}

In this section, we present our experiments. They evaluate our theories 
and compare the different DbI strategies in practice.
Firstly, we confirm the accuracy of the 1-D relation that we derived for
DbI with regular mask in \Cref{subsec:theoretical}. 
We also display the corresponding results in 2D.
Next, we show that the theoretical convergence estimates 
from~\Cref{subsubsec:convergence} also hold in practice and evaluate the
gain through low-discrepancy-based sampling 
(see \Cref{subsubsec:low-discrepancy}).
Furthermore, we assess the spatial and tonal mask optimization approaches.
To this end, we compare DbI to PDE-based methods of similar structural 
complexity.
Aside from homogeneous diffusion, we choose linear space-variant diffusion 
and nonlinear diffusion as representatives of methods that are based on 
operator optimization.
Lastly, we consider the denoising by biharmonic inpainting to further
investigate the question of data optimization vs.\ operator optimization.


\subsection{Relation Between DbI and Homogeneous Diffusion}
\label{subsec:exp-relation-dbi-hd}

In \Cref{subsec:theoretical} we derived a relation between the mask density 
$d$ and the diffusion time $T$, given by $T = (1-d^2)/(12 d^2)$.
To confirm that this relation allows for a good estimate of the diffusion 
time in practice, we perform an experiment on a 1-D signal, 
which is generated by extracting the 128th row of the \emph{peppers} 
test image.
Homogeneous diffusion is implemented using explicit Euler and the spatial 
discretization from \labelcref{eq:standard_explicit_hd} with the number of 
iterations chosen such that the desired diffusion time $T$ is reached.
The result in \Cref{fig:1d} demonstrates that the diffusion time 
obtained via \Cref{thm:mddt} is a good approximation.


\begin{figure}[tbhp]
\centering
\tabcolsep1pt
\begin{tabular}{cc}
\textbf{(a) $r=5, T=2.0$} & \textbf{(b) $r=10, T=8.25$} \\
\includegraphics[width=0.48\linewidth]{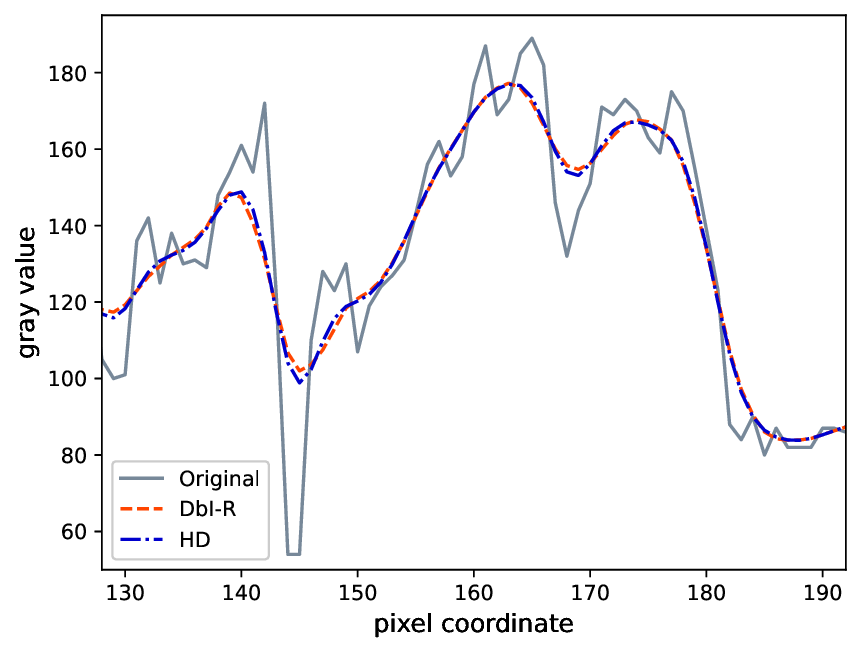} & 
\includegraphics[width=0.48\linewidth]{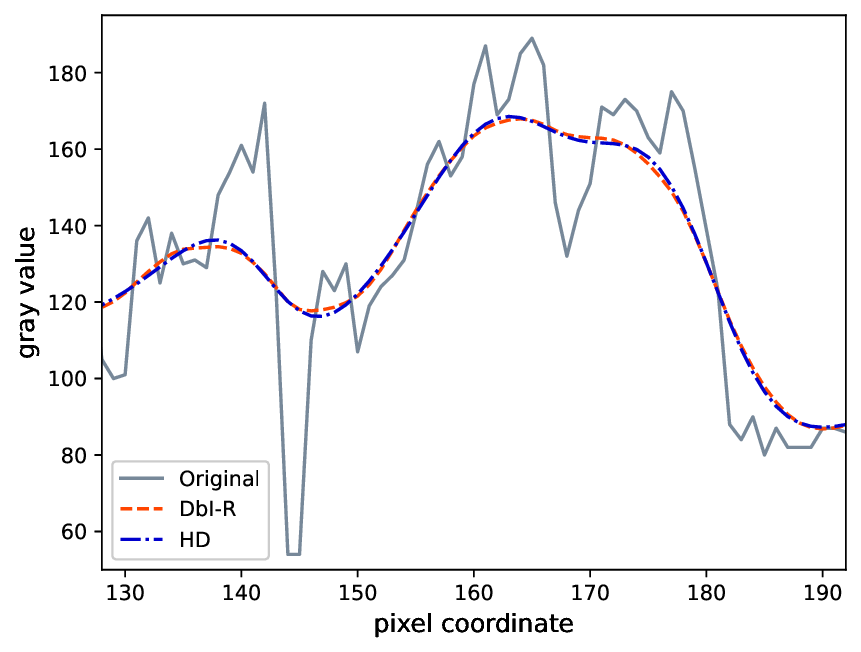}
\end{tabular}
\caption{\label{fig:1d} Comparison of denoising by inpainting with shifted 
regular masks (DbI-R) and homogeneous diffusion (HD) on a 
one-dimensional signal (128th row of the test image \emph{peppers}).
We display a section from the original signal and filtered versions 
obtained with denoising by inpainting with regular masks of 
spacing $r$ and homogeneous diffusion filtering with diffusion time 
$T$, calculated according to \Cref{thm:mddt}. 
We see that both filters lead to very similar results, confirming 
that the approximation from the theorem is indeed realistic.}
\end{figure}


In \Cref{subsec:theoretical_2D} we extended this relation to 2D, yielding
$T = (1-d^{\gamma})/(\beta d^{\gamma})$ with $\beta=4.58$ and $\gamma = 1.3$.
To confirm this, we now consider the 2-D \emph{peppers} test image.
We perform denoising by inpainting with 1024 randomly selected masks, 
as well as homogeneous diffusion filtering with the diffusion time 
calculated according to the above relation and compare the results. 
The experiments in \Cref{fig:theoretical-2d} visually and qualitatively 
confirm the accuracy of the relation in 2D.


\begin{figure}[tbhp]
\centering
\tabcolsep5pt
\begin{center}
\begin{tabular}{cc|cc}
\textbf{(a) $d=5\%$} & \textbf{(b) $T \approx 10.5$} 
& \textbf{(c) $d=1\%$} & \textbf{(d) $T \approx 86.7$} \\[1.2mm]
\includegraphics[width=0.22\textwidth]
{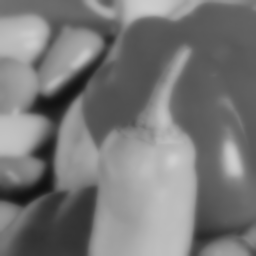} &
\includegraphics[width=0.22\textwidth]
{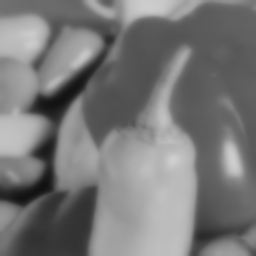} &
\includegraphics[width=0.22\textwidth]
{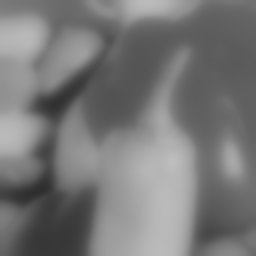} &
\includegraphics[width=0.22\textwidth]
{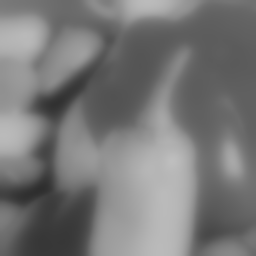} \\[-0.8mm]
DbI-Ran & HD Filtering & DbI-Ran & HD Filtering
\end{tabular}
\end{center}
\caption{\label{fig:theoretical-2d} 
Comparison of denoising by inpainting with 1024 random masks (DbI-Ran) and 
homogeneous diffusion (HD) on the test image \emph{peppers}.
The diffusion times $T$ corresponding to the mask densities $d$ are calculated
according to the result from \Cref{subsec:theoretical_2D}.
The MSE between (a) and (b) is 0.61 and the MSE between (c) and (d) is 6.37.
This shows that the empirically derived relation is accurate, even for
longer diffusion times.}
\end{figure}


\subsection{Convergence}
\label{subsec:exp-convergence}

As we have shown in \Cref{subsubsec:convergence} the estimator converges to
its expectation at a rate of $O(n^{-1/2})$ w.r.t.\ the RMSE. In 
\Cref{subsubsec:low-discrepancy} we introduced the idea of using 
low-discrepancy sequences. Theoretically, they should lead to much faster 
convergence, thus here we test whether this also holds in practice.
In the experiments we again use the $256\times 256$ test image \emph{peppers}.
We use two sampling strategies for the masks, whose sample
means $\bm{c} = \frac{1}{n}\sum_{\ell=1}^{n}\bm{c}^\ell$ converge to the
same expectation $\mathbb{E}[\bm{c}|\bm{f}]$.
As a representative of a low-discrepancy sequence we use the R2 
sequence~\cite{Ro18} to create a sampling threshold in each pixel 
(see \cite{Ro18} for details). 
This leads to a more regular sampling pattern compared to using a 
purely random threshold.
To make the experiment relevant to realistic scenarios, 
we use the analytic mask selection method from \Cref{subsec:analytic}.
We first test the mask convergence. 
To this end, we create $2^{16} = 65536$ masks via Poisson sampling and 
consider their average as converged to the expectation 
$\mathbb{E}[\bm{c}|\bm{f}]$.
Then we sample masks with both sampling strategies and observe how the 
RMSE between sample mean and expectation evolves with $n$.
Of course, we are more interested in the convergence of the
DbI result $\langle\bm{u}\rangle_n$.
Therefore, following a similar approach as for the masks, we create an 
individual ``converged'' DbI result for the two sampling methods, and 
again consider the RMSE between $\langle\bm{u}\rangle_n$ and the 
respective reference images.
\Cref{fig:exp-convergence} shows that the simple Poisson sampling leads to 
a convergence rate of $O(n^{-1/2})$ for the masks 
as well as for the DbI result,
which is perfectly in line with the theory from \Cref{subsubsec:convergence}.
Through low-discrepancy sampling this rate approaches $O(n^{-1})$. 
By fitting a curve through the data, we get a convergence rate of
$O(n^{-0.77})$ for the masks and $O(n^{-0.78})$ for the DbI result. 
The $O(n^{-1})$ estimate is typically achieved for low dimensions, 
so the difference of our results can be explained by the high 
dimensionality of our sampling problem.
The experiments confirm that the sampling strategy based on 
low-discrepancy sequences is indeed able to improve the convergence in 
practice.


\begin{figure}[tbhp]
\centering
\tabcolsep1pt
\begin{center}
\begin{tabular}{cc}
\textbf{(a) Mask Convergence} &
\textbf{(b) DbI Convergence} \\[1.2mm]
\includegraphics[width=0.49\textwidth]
{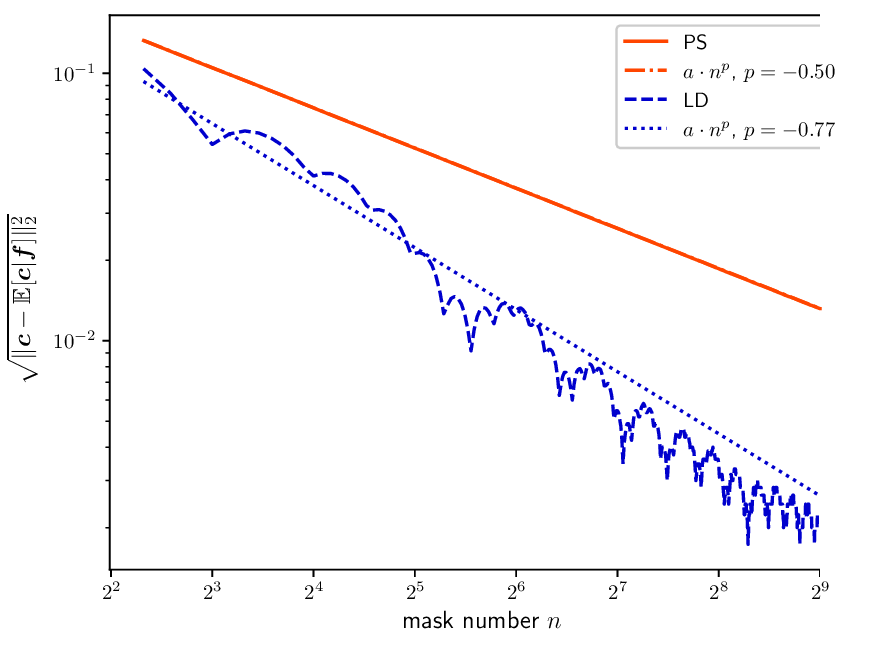} &
\includegraphics[width=0.49\textwidth]
{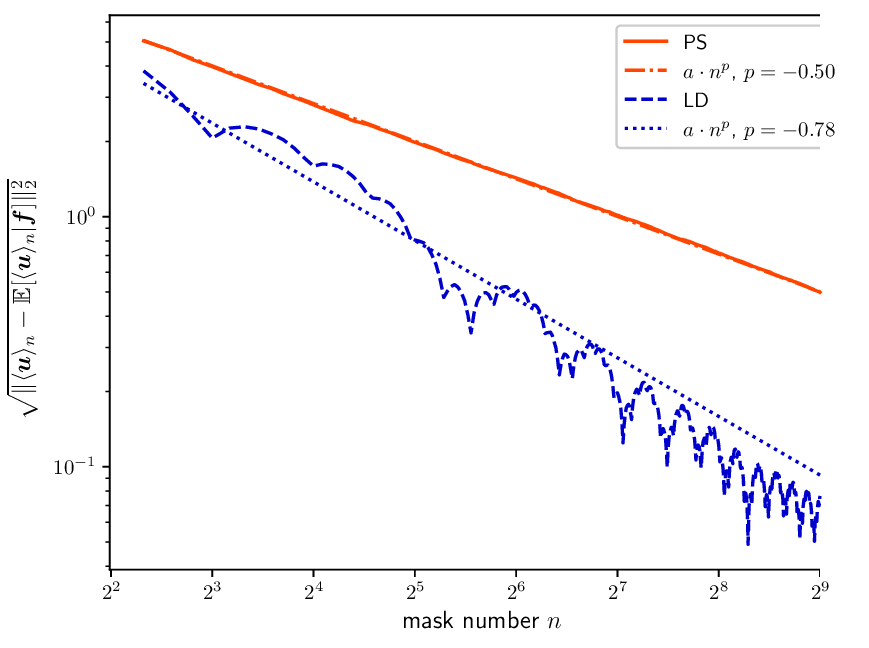}
\end{tabular}
\end{center}
\caption{\label{fig:exp-convergence} Convergence results for denoising by 
inpainting with the analytic method with Poisson sampling (PS) vs. 
low-discrepancy-based sampling (LD). (a) shows the convergence of the masks
and (b) the convergence of the DbI result.} 
\end{figure}


\subsection{Data Optimization for Denoising by Inpainting}
\label{subsec:exp-mask-optimization}

In the next step, we investigate the edge-preserving filtering behavior 
achieved by the use of adaptive masks. 
We first test the two spatial optimization methods and compare the results
to classical diffusion models. We show that DbI can yield results comparable 
to certain space-variant diffusion methods.
Then we discuss the effect of tonal optimization in the DbI setting.
It should be noted that these experiments are meant to provide an 
illustration of the mask optimization strategies and not to achieve 
the best denoising
quality. As we have shown, these strategies can be applied in a more general
setting than DbI with homogeneous diffusion inpainting. 
They are valid for the general probabilistic framework from 
\Cref{subsec:prob-theory}, and as such they also extend to more complex 
operators (including nonlinear ones).

We perform experiments on the three standard test images 
\emph{trui}, \emph{peppers} and \emph{walter} with a resolution of 
$256\times 256$, that are corrupted with additive Gaussian noise with standard 
deviations $\sigma_n \in \{10, 20, 30\}$ that we do not clip.
To ensure a fair comparison, we optimize the mask density and if required the 
pre- and post-smoothing parameter for the denoising by inpainting methods 
w.r.t.\ the MSE to the original image. We do this individually for each image 
and for each noise level using a grid search. In practice, these parameters 
need to be adapted to the noise level and the image content. We create 32 
masks with each of the mask selection methods, except for the regular masks 
where the number is determined by the spacing and thus by the density. 
For the proposed probabilistic densification algorithm we set the number of 
candidate pixels per iteration to 16.


\subsubsection{Spatial Optimization}
\label{subsubsec:exp-spatial-optimization}

Firstly, we investigate the different spatial selection strategies 
proposed in \Cref{sec:adaptive} and compare the denoising results 
with the standard diffusion methods presented in \Cref{subsec:denoising}. 
For the diffusion methods, which we discretize with an explicit scheme,
we optimize the stopping time and if required the contrast parameter 
of the Charbonnier diffusivity~\cite{CBAB97}.

As can be seen in \Cref{tab:spatial}, inpainting with regular masks leads 
to unsatisfying results, slightly worse than those obtained with 
homogeneous diffusion filtering. This is expected given the connections 
derived in \Cref{subsec:regular}. Note that the stopping time in homogeneous 
diffusion filtering can be tuned continuously, while the spacing of the 
regular mask can only be adapted in integer steps.
The analytic method based on Poisson sampling of the smoothed Laplacian 
magnitude improves the results, especially at lower noise levels. 
\Cref{fig:spatial}(c) shows how the mask pixels accumulate around important 
image structures, enabling an edge-preserving filtering behavior.
The densification method is able to further improve those results. The 
reason for this improvement can be seen in \Cref{fig:spatial}(d). 
On top of selecting pixels at reasonable positions, the error in the mask 
is reduced drastically in comparison to the analytic method, because 
densification implicitly avoids noisy pixels.
The adaptive mask selection strategies enable the denoising by inpainting 
method to produce results that are comparable to linear space-variant 
diffusion filtering. However, it cannot reach the 
quality of nonlinear diffusion. This is not surprising, as a feedback 
mechanism throughout the inpainting process is missing.
Nonetheless, the results reveal that proper data optimization enables DbI 
to compete with methods that optimize the operator, if they are of
comparable complexity.

Although qualitatively the densification approach is better than the analytic
method, its required run time is orders of magnitude larger, and this only 
gets worse for images of higher resolution. Due to the required number of 
inpaintings, the densification method takes about an hour to create a single 
mask with 10\% density for our $256\times 256$ pixel test images. 
In contrast, the analytic and the regular approaches allow instant mask 
generation in approximately a millisecond.
Thus, the analytic method yields a reasonable spatial mask pixel distribution 
in a very short time and clearly has potential, if the error in the mask 
pixels can be reduced. 
We show next that this can be achieved by complementing the mask selection 
strategies with tonal optimization.


\begin{table}[tbhp]
\footnotesize
\caption{Results (MSE) for denoising by inpainting with regular masks, 
the densification method and the analytic method 
with 32 masks (fewer masks for the regular mask method).
Comparison to classical diffusion-based denoising methods.
}\label{tab:spatial}
\centering
\begin{tabular}{c|c|c|c|c|c|c|c|c|c|c}
    & & \multicolumn{3}{c|}{\emph{trui}} & 
    \multicolumn{3}{c|}{\emph{peppers}} & 
    \multicolumn{3}{c}{\emph{walter}} \\
    & noise level $\sigma_n$ 
    & $10$ & $20$ & $30$ 
    & $10$ & $20$ & $30$ 
    & $10$ & $20$ & $30$ \\ \hline
    \multirow{3}{*}{\rotatebox[origin=c]{90}{DbI}} 
    & regular & $27.30$ & $57.29$ & $86.46$ 
    & $35.31$ & $64.40$ & $91.79$ 
    & $22.63$ & $50.13$ & $79.16$ \\
    & densification & $\bm{19.34}$ & $\bm{42.72}$ & $\bm{68.01}$ 
    & $\bm{24.36}$ & $\bm{47.27}$ & $\bm{69.89}$ 
    & $\bm{13.40}$ & $\bm{29.65}$ & $\bm{47.65}$ \\
    & analytic & $21.49$ & $49.71$ & $79.79$ 
    & $25.14$ & $51.70$ & $79.91$ 
    & $16.41$ & $37.83$ & $62.08$ \\ \hline
    \multirow{3}{*}{\rotatebox[origin=c]{90}{Diff}} 
    & homogeneous & $24.12$ & $50.18$ & $76.12$ 
    & $32.16$ & $59.77$ & $84.58$ 
    & $19.65$ & $42.76$ & $66.87$ \\
    & lin. space-var. & $17.89$ & $42.62$ & $69.57$ 
    & $24.03$ & $47.47$ & $72.67$ 
    & $13.31$ & $32.30$ & $55.37$ \\
    & nonlinear & $\bm{16.21}$ & $\bm{34.99}$ & $\bm{54.66}$ 
    & $\bm{22.63}$ & $\bm{40.48}$ & $\bm{57.54}$ 
    & $\bm{11.89}$ & $\bm{25.31}$ & $\bm{39.49}$ \\ 
\end{tabular}
\end{table}


\begin{figure}[tbhp]
\centering
\tabcolsep5pt
\begin{center}
\begin{tabular}{cccc}
\textbf{(a) input} & \textbf{(b) regular} & 
\textbf{(c) densification} & \textbf{(d) analytic} \\[1.2mm]
\includegraphics[width=0.22\textwidth]
{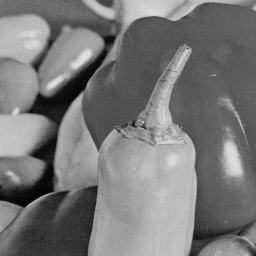} &
\includegraphics[width=0.22\textwidth]
{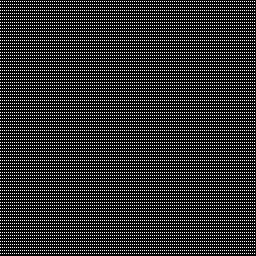} &
\includegraphics[width=0.22\textwidth]
{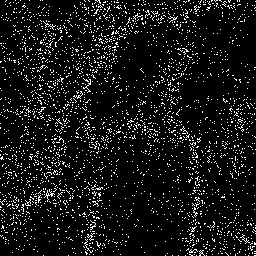}  &
\includegraphics[width=0.22\textwidth]
{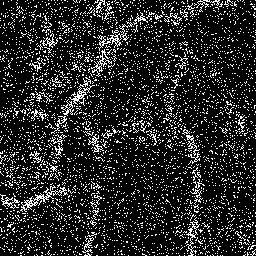} \\[-0.8mm]
original & Mask MSE: $405.09$ & 
Mask MSE: $345.23$  & Mask MSE: $410.63$ \\[1.2mm]
\includegraphics[width=0.22\textwidth]
{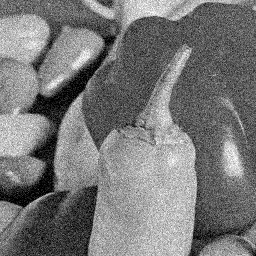} &
\includegraphics[width=0.22\textwidth]
{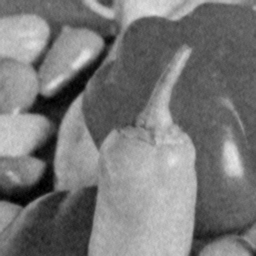} &
\includegraphics[width=0.22\textwidth]
{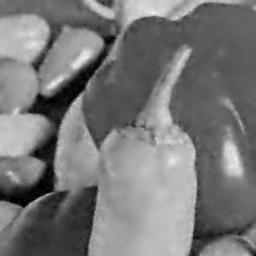}  &
\includegraphics[width=0.22\textwidth]
{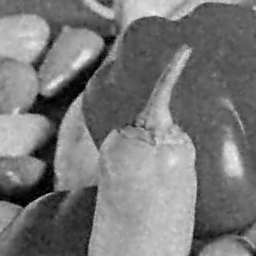} \\[-0.8mm]
noisy ($\sigma_n=20$) & MSE: $64.40$ & MSE: $47.27$ & MSE: $51.70$
\end{tabular}
\end{center}
\caption{\label{fig:spatial}
Results for denoising by inpainting with 32 masks 
(six masks for the regular mask method) for the different spatial 
optimization methods on the test image \emph{peppers} with $\sigma_n=20$. 
Top row: (a) original image, (b)-(d) one representative out of all 
the masks for every method. The MSE is computed at mask pixels. 
Bottom row: (a) noisy image, (b)-(d) denoising by inpainting results with 
optimized parameters and the MSE in the entire image. 
We see that our analytic method and the densification method adapt the 
mask point locations to the structure of the image. 
Densification additionally avoids choosing noisy mask pixels, leading to 
a smaller error in the mask pixels and eventually to a better 
reconstruction.}
\end{figure}


\subsubsection{Tonal Optimization}
\label{subsubsec:exp-tonal-optimization}

As mentioned in \Cref{ss:MMSE_and_Tonal_Optimization}, tonal 
optimization leads to an MMSE estimate that is approximating 
instead of interpolating. 
If one assumes that mask pixels are erroneous due to the noise, this is
certainly a desirable behavior. We will evaluate its effect in the following. 
To this end, we apply tonal optimization to the masks obtained by 
each of our spatial optimization methods. We optimize the tonal values for 
each individual mask, before once again averaging the respective inpaintings 
to obtain the final denoised result.

The results in \Cref{tab:tonal} reveal that the methods that do not consider 
the noise in the selection process get the greatest boost in performance. 
This confirms the conjecture that tonal optimization is able to mitigate the 
negative effect of noisy mask pixels selection.
We also observe that tonal optimization decreases the error in the mask pixels 
for those methods. In \Cref{fig:spatial}, the MSE at mask locations decreases 
from 405.09 to 271.80 for the regular mask, and from 410.63 to 306.62 for 
the analytic method.
For probabilistic densification, tonal optimization barely changes the final 
results, as well as the mask MSE (which even increases slightly from 345.23 
to 356.12 in the example).

We see that tonal optimization enables the analytic method to produce 
results of quality comparable to those of the densification method, 
and of better quality than space-variant diffusion.
Although the tonal optimization step takes some additional seconds, 
the analytic method is still orders of magnitude faster than the 
densification method.
\Cref{fig:tonal-results} shows a selection of resulting images comparing 
the two adaptive mask selection methods with tonal optimization 
and linear space-variant diffusion, as the diffusion method 
that leads to the most similar results.


\begin{table}[tbhp]
\footnotesize
\caption{\label{tab:tonal} Results (MSE) for denoising by inpainting 
with regular masks, the densification method and the analytic method 
with 32 masks 
(less masks for the regular mask method) including tonal optimization.
Comparison to classical diffusion-based denoising methods.}
\centering
\begin{tabular}{c|c|c|c|c|c|c|c|c|c|c}
    & & \multicolumn{3}{c|}{\emph{trui}} & 
    \multicolumn{3}{c|}{\emph{peppers}} & 
    \multicolumn{3}{c}{\emph{walter}} \\
    & noise level $\sigma_n$ 
    & $10$ & $20$ & $30$ 
    & $10$ & $20$ & $30$ 
    & $10$ & $20$ & $30$ \\ \hline
    \multirow{3}{*}{\rotatebox[origin=c]{90}{DbI}} 
    & regular & $22.32$ & $48.77$ & $76.06$ 
    & $32.65$ & $60.92$ & $87.05$ 
    & $16.36$ & $39.54$ & $64.45$ \\
    & densification & $18.46$ & $41.56$ & $67.72$ 
    & $24.42$ & $47.28$ & $70.21$ 
    & $12.35$ & $28.13$ & $45.92$ \\
    & analytic & $\bm{17.24}$ & $\bm{39.49}$ & $\bm{63.17}$ 
    & $\bm{23.68}$ & $\bm{46.43}$ & $\bm{68.55}$ 
    & $\bm{12.08}$ & $\bm{27.66}$ & $\bm{45.36}$ \\ \hline
    \multirow{3}{*}{\rotatebox[origin=c]{90}{Diff}} 
    & homogeneous & $24.12$ & $50.18$ & $76.12$ 
    & $32.16$ & $59.77$ & $84.58$ 
    & $19.65$ & $42.76$ & $66.87$ \\
    & lin. space-var. & $17.89$ & $42.62$ & $69.57$ 
    & $24.03$ & $47.47$ & $72.67$ 
    & $13.31$ & $32.30$ & $55.37$ \\
    & nonlinear & $\bm{16.21}$ & $\bm{34.99}$ & $\bm{54.66}$ 
    & $\bm{22.63}$ & $\bm{40.48}$ & $\bm{57.54}$ 
    & $\bm{11.89}$ & $\bm{25.31}$ & $\bm{39.49}$ \\
\end{tabular}
\end{table}


\begin{figure}[tbhp]
\centering
\tabcolsep5pt
\begin{center}
\begin{tabular}{cccc}
\textbf{(a) noisy input} & \textbf{(b) lin. sp.-var.} & 
\textbf{(c) densification} & \textbf{(d) analytic} \\[1.2mm]
\includegraphics[width=0.22\textwidth]
{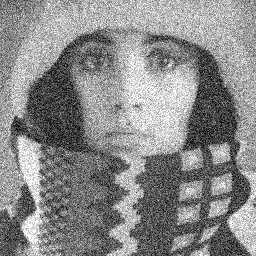} &
\includegraphics[width=0.22\textwidth]
{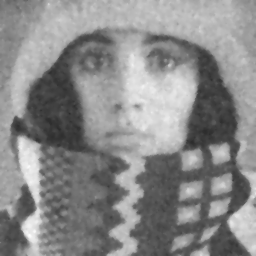} &
\includegraphics[width=0.22\textwidth]
{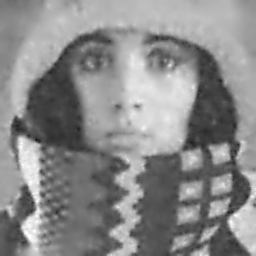}  &
\includegraphics[width=0.22\textwidth]
{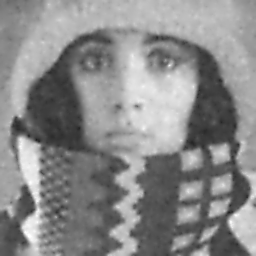} \\[-0.8mm]
\emph{trui}, $\sigma_n=30$ & MSE: $69.57$ & MSE: $67.72$ & MSE: $63.17$ \\[1.2mm]
\includegraphics[width=0.22\textwidth]
{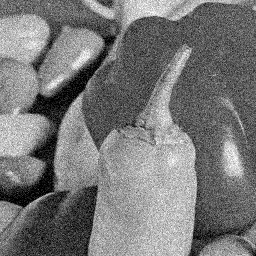} &
\includegraphics[width=0.22\textwidth]
{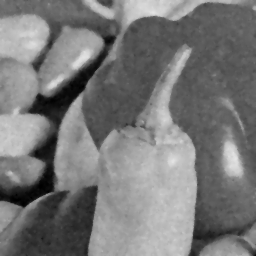} &
\includegraphics[width=0.22\textwidth]
{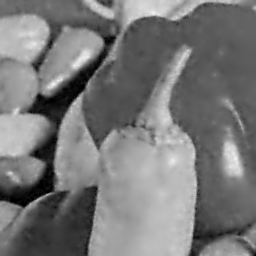}  &
\includegraphics[width=0.22\textwidth]
{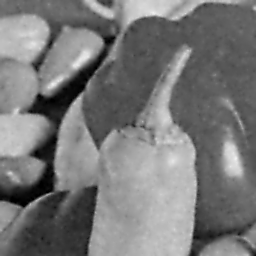} \\[-0.8mm]
\emph{peppers}, $\sigma_n=20$ & MSE: $47.47$ & MSE: $47.28$ & MSE: $46.43$ \\[1.2mm]
\includegraphics[width=0.22\textwidth]
{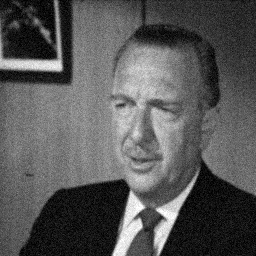} &
\includegraphics[width=0.22\textwidth]
{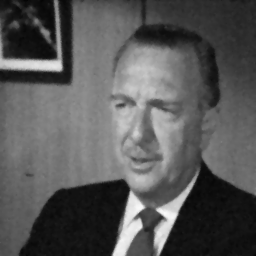} &
\includegraphics[width=0.22\textwidth]
{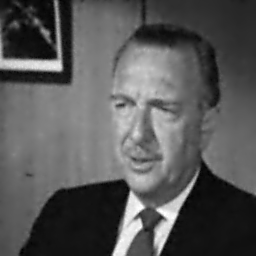} &
\includegraphics[width=0.22\textwidth]
{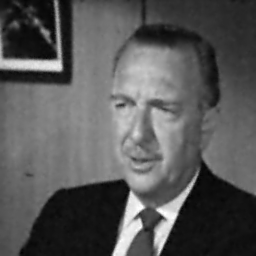} \\[-0.8mm]
\emph{walter}, $\sigma_n=10$ & MSE: $13.31$& MSE: $12.35$ & MSE: $12.08$ 
\end{tabular}
\end{center}
\caption{\label{fig:tonal-results} Visual comparison of linear space-variant
diffusion and denoising by inpainting with the densification method and 
the analytic method on three test images with noise. 
Both DbI methods are using tonal optimization.}
\end{figure}


\subsection{Denoising by Biharmonic Inpainting}
\label{subsec:exp-biharmonic}

Our previous results reveal that optimizing the data instead of the operator
constitutes an interesting alternative for image denoising. 
To further substantiate this idea, we now adapt the inpainting operator 
within the DbI framework. We consider biharmonic inpainting as a 
representative of a higher-order polyharmonic operator.

It has been shown that the biharmonic operator can have quality 
advantages over homogeneous diffusion (i.e., the harmonic operator) 
in classical sparse inpainting~\cite{CRP14, GWWB08, SPMEWB14}.
Biharmonic inpainting is given by the PDE
\begin{equation}
\label{eq:biharmonic-inp-pde-single}
\bigl(c(\bm{x}) + (1 - c(\bm{x})) \Delta^2 \bigr)u(\bm{x}) 
= c(\bm{x})f(\bm{x}) \quad \text{for } \bm{x} \in \Omega,
\end{equation}
with 
$\Delta^2 u = \partial_{xxxx} u + 2 \partial_{xxyy} u + \partial_{yyyy} u$ 
and reflecting boundary conditions $\partial_{\bm{n}}u(\bm{x}) = 0$ and 
$\partial_{\bm{n}}\Delta u (\bm{x}) = 0$ for $\bm{x}\in\partial\Omega$. 
It can be derived from the following variational formulation 
(analogously to \labelcref{eq:harmonic_variational_formulation}):
\begin{equation}
\label{eq:biharmonic_variational_formulation}
\min_{u}\int_{\Omega}(\Delta u(\bm{x}))^2\,d\bm{x}, 
\textrm{ such that } u(\bm{x}) = f(\bm{x}) \textrm{ for } \bm{x} \in K.
\end{equation}
This shows that biharmonic inpainting penalizes second-order derivatives.
Biharmonic inpainting does not suffer from the typical singularities at mask 
points that homogeneous diffusion inpainting produces. On the other hand it 
can produce over- and undershoots, since it does not guarantee a 
maximum-minimum principle.
We evaluate the potential of biharmonic inpainting for denoising by comparing 
it to homogeneous diffusion inpainting. 
To ensure that the results reflect the quality of the operators, 
we first perform the experiment on fully random masks.

Our results in \Cref{tab:biharmonic-random} show that biharmonic inpainting 
does lead to an improvement, and it is largest at low noise levels. 
This is to be expected, as the method is not as radical as homogeneous 
diffusion inpainting, since it penalizes second degree instead of first 
degree derivatives. 
However, already tonal optimization as a first data optimization step 
neutralizes this advantage and the two methods perform similarly. 
These results support our reasoning that data optimization plays a 
significant role for the denoising abilities of our framework, being 
more important than the use of more complex, higher-order models. 
Further experiments on spatially optimized masks 
(see \Cref{tab:biharmonic-analytic}) confirm our findings, and 
even shift the advantage towards homogeneous diffusion inpainting. 
When comparing to previous results from classical sparse image inpainting, 
one has to consider that the singularities, 
that homogeneous diffusion inpainting suffers from, are suppressed by the 
averaging in the DbI framework. Thus, this disadvantage of homogeneous 
diffusion inpainting does not come into play in our scenario. 
Lastly, one should keep in mind that biharmonic inpainting leads to a higher 
condition number of the inpainting matrix, and consequently each inpainting 
is numerically more burdensome and less efficient.


\begin{table}[tbhp]
\footnotesize
\caption{Results (MSE) for denoising by inpainting with 32 random masks 
using homogeneous diffusion (HD) and biharmonic (BI) inpainting, without 
and with tonal optimization (TO).}\label{tab:biharmonic-random}
\centering
\begin{tabular}{c|c|c|c|c|c|c|c|c|c}
    & \multicolumn{3}{c|}{\emph{trui}} & 
    \multicolumn{3}{c|}{\emph{peppers}} & 
    \multicolumn{3}{c}{\emph{walter}} \\
    noise level $\sigma_n$ 
    & $10$ & $20$ & $30$ 
    & $10$ & $20$ & $30$ 
    & $10$ & $20$ & $30$ \\ \hline
    HD, without TO & $30.53$ & $65.51$ & $100.51$ 
    & $36.01$ & $71.20$ & $104.21$ 
    & $26.93$ & $60.21$ & $94.37$  \\
    BI, without TO & $\bm{24.23}$ & $\bm{56.37}$ & $\bm{93.51}$ 
    & $\bm{33.28}$ & $\bm{66.61}$ & $\bm{102.49}$ 
    & $\bm{19.16}$ & $\bm{48.19}$ & $\bm{82.92}$  \\
    \hline
    HD, with TO & $23.10$ & $49.83$ & $\bm{76.57}$ 
    & $\bm{31.98}$ & $\bm{59.75}$ & $\bm{85.28}$ 
    & $18.09$ & $41.79$ & $66.40$ \\
    BI, with TO & $\bm{22.21}$ & $\bm{49.74}$ & $77.25$ 
    & $33.27$ & $61.84$ & $87.92$ 
    & $\bm{16.52}$ & $\bm{39.53}$ & $\bm{65.03}$ \\
\end{tabular}
\end{table}


\begin{table}[tbhp]
\footnotesize
\caption{Results (MSE) for denoising by inpainting with 32 masks obtained with 
the analytic method using homogeneous diffusion (HD) and biharmonic (BI) 
inpainting, without and with tonal optimization (TO).}
\label{tab:biharmonic-analytic}
\centering
\begin{tabular}{c|c|c|c|c|c|c|c|c|c}
    & \multicolumn{3}{c|}{\emph{trui}} & 
    \multicolumn{3}{c|}{\emph{peppers}} & 
    \multicolumn{3}{c}{\emph{walter}} \\
    noise level $\sigma_n$ 
    & $10$ & $20$ & $30$ 
    & $10$ & $20$ & $30$ 
    & $10$ & $20$ & $30$ \\ \hline
    HD, without TO & $21.49$ & $49.71$ & $\bm{79.79}$ 
    & $\bm{25.14}$ & $\bm{51.70}$ & $\bm{79.91}$ 
    & $16.41$ & $37.83$ & $\bm{62.08}$  \\
    BI, without TO & $\bm{19.01}$ & $\bm{47.47}$ & $82.39$ 
    & $25.83$ & $55.42$ & $90.28$ 
    & $\bm{14.16}$ & $\bm{37.15}$ & $68.25$  \\
    \hline
    HD, with TO & $17.24$ & $\bm{39.49}$ & $\bm{63.17}$ 
    & $\bm{23.68}$ & $\bm{46.43}$ & $\bm{68.55}$ 
    & $12.08$ & $27.66$ & $\bm{45.36}$  \\
    BI, with TO & $\bm{17.18}$ & $40.45$ & $66.13$ 
    & $25.35$ & $49.27$ & $72.68$ 
    & $\bm{11.74}$ & $\bm{27.22}$ & $45.70$ \\
\end{tabular}
\end{table}


\section{Conclusions}
\label{sec:conclusion}

Our work is the first that links the tasks of PDE-based image inpainting and 
denoising in a systematic way, by providing an explicit connection
between homogeneous diffusion inpainting and denoising through a relation 
between the diffusion time and the mask density.
Our {\em denoising by inpainting (DbI)} framework 
achieves denoising by averaging inpainting results with different sparse 
masks of the same density. It consitutes a means to investigate the 
connections between PDE-based denoising and inpainting and allows us to
evaluate the denoising potential of PDE-based inpainting methods.
We have established a probabilistic theory with convergence estimates for 
the framework, and have extended it to a deterministic version by the use of
low-discrepancy sequences. 
We have further shown that this framework computes an approximation to an 
MMSE estimate.
For non-adaptive masks we have linked the framework to classical diffusion 
via a one-to-one relationships between the mask density and the diffusion 
time. 
We have demonstrated that a simple operator can exhibit space-variant 
filtering behavior, 
when supplemented with adaptive data selection strategies. 
Experiments with a higher-order inpainting operator, which can be more 
powerful than homogeneous diffusion inpainting~\cite{CRP14, GWWB08,SPMEWB14}, 
have underlined the importance of choosing appropriate 
data over more complex operators.
For data optimization specific to denoising by inpainting, we have presented 
two distinct, fundamental strategies. The densification method from our 
conference paper~\cite{APW17} aims at finding pixels that represent the 
data well. Thereby, it implicitly avoids the selection of noisy mask pixels 
during spatial optimization.
On the contrary, we have proposed a new approach, where the selection of 
noisy pixels is tolerated in the spatial optimization but is compensated
for by the tonal optimization.

\medskip
Our work constitutes an unconventional, new viewpoint on image denoising:
By using a simple inpainting operator but focusing on adequate data 
selection we \emph{shift the priority from optimizing the filter 
model to optimizing the considered data.}
Moreover, our densification strategy allows us to find the most 
trustworthy pixels in the data. 
This shows that \emph{simple filter operators such 
as homogeneous diffusion can give deep insights into data}.
Last but not least, we have seen that the filling-in effect is not only
useful in variational optic flow models and in PDE-based inpainting,
but also in denoising. This emphasizes its fundamental role in digital
image analysis, which is in full agreement with classical results 
from biological vision~\cite{We35}.

\medskip
While our focus in the present paper is on gaining fundamental insights
into the potential of inpainting ideas for denoising, our future work
will deal with various modifications to make these ideas also applicable 
to more recent denoising methods.
To this end, we are going to consider more sophisticated inpainting
operators~\cite{SPMEWB14} and data selection strategies~\cite{DAW21}, 
including neural ones~\cite{PSAW23}, and the incorporation of 
more advanced types of data~\cite{JCW23}. 
Such future work should also extend our theory to, e.g., space-variant 
and nonlinear operators.


\backmatter

\section*{Declarations}

\paragraph{Acknowledgements}
Not applicable.

\paragraph{Author contribution}
DG conducted experiments and provided theory. VC provided mathematical 
theory. DG and VC wrote the paper, with feedback from JW and PP.
RDA, PP and JW jointly contributed to the conference paper.
The authors have read and approved the final manuscript.

\paragraph{Data/Code availability}
The datasets used and/or analyzed during the current study are available 
from the corresponding author on reasonable request.

\paragraph{Funding}
This project has received funding from the European Research 
Council (ERC) under the European Union's Horizon 2020 research and 
innovation programme (grant agreement No 741215, ERC Advanced Grant 
INCOVID).

\paragraph{Competing interests}
The authors declare that they have no competing interests.


\section*{List of Abbreviations}

\begin{description}[leftmargin=!, labelwidth=\widthof{\textit{MMSE}}]
    \item[DbI] Denoising by Inpainting
    \item[DCT] Discrete Cosine Transform
    \item[EED] Edge-Enhancing Diffusion
    \item[MAP] Maximum A Posteriori
    \item[ML] Maximum Likelihood
    \item[MMSE] Minimum Mean Squared Error
    \item[MSE] Mean Squared Error
    \item[PDE] Partial Differential Equation
    \item[PDF] Probability Density Function
    \item[PMF] Probability Mass Function
    \item[RMSE] Root Mean Square Error
\end{description}


\begin{appendices}


\section{Proof of \Cref{thm:pmf-dens}}
\label[appendix]{app:densification}

We derive the expression for the stated probability in \Cref{thm:pmf-dens} 
for step $k+1$ here.
At the beginning of step $k+1$, \Cref{alg:densification} has already inserted 
$k$ mask pixels yielding the mask $\bm{c}^k$. At the end of step $k+1$ we want 
to have inserted a new mask pixel that is not in $\bm{c}^k$. Consequently, 
we can select a pixel only from the set 
$\mathcal{I}^k$ of remaining empty mask pixel locations, 
with $|\mathcal{I}^k|=N-k$.
The algorithm samples a set $\mathcal{X}$ of $\alpha$ distinct candidates 
from $\mathcal{I}^k$ uniformly at random (there are 
$C^{N-k}_{\alpha}$ different ways to do so):
\begin{equation}
\mathcal{X} = 
\{X_1, \ldots, X_\alpha \in \mathcal{I}^k \,:\, X_i\ne X_j 
\text{ for } i\ne j\}.
\end{equation}
Then one chooses the candidate $X^*\in\mathcal{X}$ with lowest reconstruction 
error (w.r.t.\ the noisy image $\bm{f}$):
\begin{equation}
X^*\in \mathcal{X}^* = \underset{X\in\mathcal{X}}{\argmin}\,E^k(X), 
\quad E^k(X) \coloneqq \|\bm{r}(\bm{c}^k+\bm{e}_X, \bm{f})-\bm{f}\|^2_2,
\end{equation}
where $\bm{e}_X\in\R^N$ is the zero vector modified with a one at the location 
corresponding to mask point $X$. The minimizer does not have to be unique; 
in fact the set of minimizers
\begin{equation}
\mathcal{X}^* = \{X_i\in\mathcal{X}\,:\,E^k(X_i) 
= \min_{X\in\mathcal{X}}E^k(X)\}
\end{equation}
may have more than one element ($|\mathcal{X}^*|>1$) in which case we choose 
$X^*$ uniformly at random from $\mathcal{X}^*$ with probability 
$\frac{1}{|\mathcal{X}^*|}$. This completes step $k+1$, now with a specific 
$X^* = x^*$ and corresponding mask $\bm{c}^{k+1} = \bm{c}^k + \bm{e}_{x^*}$. 
If the desired number of mask points have been achieved the algorithm ends, 
otherwise one proceeds to step $k+2$ in the exact same manner.

After we have inserted mask pixel $x^*\in\mathcal{I}^k$ we want to be able 
to compute the probability $\Pr(X^* = x^*)$ of this occurring. This is equal 
to the probability of $x^*$ having been selected as a candidate:
\begin{equation}
\Pr(x^*\in\mathcal{X}) 
= C^{N-k-1}_{\alpha-1}/C^{N-k}_{\alpha} = \frac{\alpha}{N-k},
\end{equation}
multiplied by the probability 
$\Pr(x^*\in\mathcal{X}^*\,|\,x^*\in\mathcal{X})$ that $x^*$ ends up in 
$\mathcal{X}^*$, which is in turn multiplied by the probability 
$\Pr(X^*=x^*\,|\,x^*\in\mathcal{X}^*) = \frac{1}{|\mathcal{X}^*|}$ of having 
picked $x^*$ from $\mathcal{X}^*$ uniformly at random. We thus have the 
following chain of conditional probabilities:
\begin{align}
\begin{split}
\Pr(X^*=x^*) 
&= \Pr(X^*=x^*\,|\,x^*\in\mathcal{X}^*)\Pr(x^*\in\mathcal{X}^*) \\
&= \Pr(X^*=x^*\,|\,x^*\in\mathcal{X}^*)
\Pr(x^*\in\mathcal{X}^*\,|\,x^*\in\mathcal{X})\Pr(x^*\in\mathcal{X}) \\
&= \frac{1}{|\mathcal{X}^*|}\Pr(x^*\in\mathcal{X}^*\,|\,x^*\in\mathcal{X})
\frac{\alpha}{N-k}.
\end{split}
\end{align}
We can write the terms involving $\mathcal{X}^*$ in the following manner:
\begin{equation}
\frac{1}{|\mathcal{X}^*|}\Pr(x^*\in\mathcal{X}^*\,|\,x^*\in\mathcal{X}) = 
\sum_{\beta=1}^{\alpha}\frac{1}{\beta}
\Pr(x^*\in\mathcal{X}^*\land |\mathcal{X}^*|=\beta\,|\,x^*\in\mathcal{X}).
\end{equation}
The probability on the right-hand side can be rewritten as requiring $\beta$ 
of the candidates to have energy equal to $E^k(x^*)$ and the remaining 
$\alpha-\beta$ having a strictly larger energy:
\begin{equation}
\begin{gathered}
\Pr(x^*\in\mathcal{X}^*\land |\mathcal{X}^*|= \beta\,|\,x^*\in\mathcal{X})
= \\ = \Pr\left(\left(\bigwedge_{i=1}^{\beta}E^k(X_i)=E^k(x^*)\right)
\land\left( \bigwedge_{j=\beta+1}^{\alpha} E^k(X_j)>E^k(x^*)\right)\,
\Bigg|\,x^*\in\mathcal{X}\right).
\end{gathered}
\end{equation}
To compute the above probabilities we would need to know the total number 
of pixels from $\mathcal{I}^k$ with energy equal to $E^k(x^*)$:
\begin{equation}
N_{eq} \coloneqq |\{x\in\mathcal{I}^k\,:\,E^k(x)= E^k(x^*)\}|,
\end{equation}
and the total number of pixels from $\mathcal{I}^k$ having a strictly 
higher energy:
\begin{equation}
N_{gt} \coloneqq |\{x\in\mathcal{I}^k\,:\,E^k(x)> E^k(x^*)\}|.
\end{equation}
From the requirement $|\mathcal{X}^*|=\beta$, it follows that we need to 
choose $\beta$ pixels that have energy equal to $E^k(x^*)$. 
However, $x^*\in\mathcal{X}$ so $E^k(X)=E^k(x^*)$ with probability $1$ for 
at least one candidate $X=x^*$. Then $\beta-1$ elements $X_i$ remain to be 
selected from $N_{eq}-1$ locations, the total number of possibilities 
being $C^{N_{eq}-1}_{\beta-1}$. Finally the remaining $\alpha-\beta$ 
candidates must be selected from $N_{gt}$ locations, 
resulting in $C^{N_{gt}}_{\alpha-\beta}$ options. 
Using this we can compute the probability
\begin{equation}
\frac{1}{|\mathcal{X}^*|}\Pr(x^*\in\mathcal{X}^*\,|\,x^*\in\mathcal{X}) 
= \sum_{\beta=1}^{\alpha}\frac{1}{\beta}
\frac{C^{N_{eq}-1}_{\beta-1}C^{N_{gt}}_{\alpha-\beta}}
{C^{N-k-1}_{\alpha-1}}.
\end{equation}
Ultimately we get the following probability for step $k+1$:
\begin{equation}
\Pr(X^*=x^*) = \frac{\alpha}{N-k}\sum_{\beta=1}^{\alpha}\frac{1}{\beta}
\frac{C^{N_{eq}-1}_{\beta-1}C^{N_{gt}}_{\alpha-\beta}}
{C^{N-k-1}_{\alpha-1}} 
= \sum_{\beta=1}^{\alpha}\frac{1}{\beta}
\frac{C^{N_{eq}-1}_{\beta-1}C^{N_{gt}}_{\alpha-\beta}}{C^{N-k}_{\alpha}}.
\end{equation}
Through the probabilistic densification procedure the exact same mask $\bm{c}$, 
with $\|\bm{c}\|_0$ mask pixels, can be constructed in $\|\bm{c}\|_0!$ 
different ways 
(the same set of mask pixels being introduced in all possible orders). 
That is, we get the probability mass function $p_{\sigma}(\bm{c}|\bm{f})$ 
over masks that also retain 
the order of insertion of their mask pixels (e.g., we can modify $\bm{c}$ by 
setting entries equal to one, to be equal to $k$: the step in which those 
were inserted). To get the usual probability mass function over binary masks 
we need to sum up the above probabilities over all $\|\bm{c}\|_0!$ permutations 
of point insertion orders. The main issue for practicality is 
that $N_{eq}$ and $N_{gt}$ must be known, which would require evaluating 
all possible $|\mathcal{I}^k|=N-k$ inpaintings for a single step. 
Nevertheless, Monte Carlo can be used to estimate the probabilities.


\section{Probability for Error Diffusion Masks}
\label[appendix]{app:error-diff}

Error diffusion halftoning (e.g., Floyd-Steinberg dithering~\cite{FS76}) can 
be used to produce a binary mask $\bm{c}\in\{0,1\}^N$ from a continuous density 
image $\bm{d}\in[0,1]^N$. The process involves iterating over the image pixels 
(e.g., in serpentine order), binarizing a single pixel at a given step, and 
then diffusing the error arising from the binarization to the set of currently 
non-visited pixels. This results in a sequence of images 
$\bm{d}=\bm{d}^1, \bm{d}^2,\ldots, \bm{d}^{N+1}=\bm{c}$. 
The binarization happens according to a thresholding step, which usually reads:
\begin{equation}
    c_k = d^{k+1}_k = 
    \begin{cases}
        0 &\text{for } d^k_k < 0.5, \\
        1 &\text{for } d^k_k \geq 0.5.
    \end{cases}
\end{equation}
Since we want to get multiple 
masks stochastically, we randomize the process by sampling a uniform random 
number $v_k\in[0,1]$ for pixel $k$, and then perform thresholding: 
\begin{equation}
    c_k = d^{k+1}_k = 
    \begin{cases}
        0 &\text{for } d^k_k < v_k, \\
        1 &\text{for } d^k_k \geq v_k.
    \end{cases}
\end{equation}
Then the probability mass function for mask $\bm{c}$ constructed from density 
image $\bm{d}$ is 
\begin{equation}
p_{\bm{d}}(\bm{c}) = 
\frac{1}{P}\prod_{k=1}^N (d^{k}_k(\bm{c}))^{c_k}
(1-d^{k}_k(\bm{c}))^{1-c_k}, 
\quad P = \sum_{\bm{c}\in\{0,1\}^N}\prod_{k=1}^N 
(d^{k}_k(\bm{c}))^{c_k}(1-d^{k}_k(\bm{c}))^{1-c_k}.
\end{equation}
In the above $d^k_k(\bm{c})$ are assumed to be clamped to $[0,1]$.
Note that while this bears similarity to Poisson sampling 
(see \labelcref{eq:poisson_sampling}), the probability 
$d^k_k(\bm{c})$ is conditioned on the probabilities in the $k$ previous 
steps. Algorithmically it is trivial to compute the numerator of 
the probability during the error diffusion process. 


\end{appendices}


\bibliography{dbi-references}

\end{document}